\documentclass[conference,letterpaper]{IEEEtran}

\usepackage[utf8]{inputenc} 
\usepackage[T1]{fontenc}
\usepackage{url}
\usepackage{ifthen}
\usepackage{cite}
\usepackage[cmex10]{amsmath}
\usepackage{algorithmic}
\usepackage{graphicx}
\usepackage{textcomp}
\usepackage{xcolor}
\usepackage[english]{babel}
\usepackage{tikz}
\usetikzlibrary{positioning, arrows.meta}
\usepackage{enumitem}
\usepackage{tikz-cd}
\usepackage{extarrows}
\usepackage{bm}
\usepackage{amsfonts}
\usepackage{balance}

\newtheorem{definition}{Definition}
\newtheorem{example}{Example}

\newtheorem{theorem}{Theorem}
\newtheorem{lemma}[theorem]{Lemma}

\newtheorem{remark}{Remark}

\begin{document}

\title{Universal Feature Selection with Noisy Observations and Weak Symmetry Conditions}

\author{\IEEEauthorblockN{Dier Tang}
\IEEEauthorblockA{\textit{Department of Mathematics} \\
\textit{The University of Hong Kong}\\
Hong Kong, China \\
tangde\_math@connect.hku.hk}
\and
\IEEEauthorblockN{Guangyue Han}
\IEEEauthorblockA{\textit{Department of Mathematics} \\
\textit{The University of Hong Kong}\\
Hong Kong, China \\
ghan@hku.hk}
}

\maketitle

\begin{abstract}
This paper relaxes the restrictive symmetry conditions adopted in \cite{xu2022information,huang2024universal} and extends their universal feature selection framework to accommodate noisy observations as well as attribute structures that may exhibit directional preferences. We introduce the notion of weak spherical symmetry, quantified by second-moment distances, which allows controlled deviations from rotational invariance. Under this relaxed condition, we develop a universal feature selection framework based on the singular value decomposition of the canonical dependence matrix computed from noisy data. Our main result shows that the selected features achieve asymptotically optimal error exponents up to a residual term that depends on the symmetry deviation $\delta$ and the noise levels $\eta_1, \eta_2$. When $\delta, \eta_1, \eta_2$ are relatively small, our result recovers that of \cite{huang2024universal}, thereby demonstrating that exact spherical symmetry is unnecessary. Overall, our findings highlight the robustness of the selection framework against second-moment deviations and observation noise, thereby broadening its applicability across diverse inference tasks and providing a theoretically grounded tool for universal feature selection in practical scenarios.
\end{abstract}

\begin{IEEEkeywords}
Universal feature selection, machine learning, statistical inference, weak spherical symmetry, local information geometry.
\end{IEEEkeywords}

\section{Introduction}

Feature selection is a fundamental task in machine learning and statistical inference \cite{guyon2003introduction}. Extracting relevant features retains useful information and removes redundancies, thus improving the explainability, computational efficiency, and predictive accuracy in large-scale data analysis \cite{li2024exploring,kamalov2025mathematical}.

While task-specific features can be obtained with sufficient domain knowledge, a more ambitious goal is to extract \emph{universally effective} features that remain applicable across a diverse range of tasks. Recent work by \cite[Theorem~1]{xu2022information} and \cite[Section~5.4]{huang2024universal} operates within the Markov chain framework $U \leftrightarrow X \leftrightarrow Y \leftrightarrow V$, focusing on the use of features derived from the observed variables $X$ and $Y$ to infer the latent attributes $U$ and $V$. Under the assumption that the attribute structures exhibit no directional preference---i.e., it satisfies spherical symmetry \cite{chmielewski1981elliptically, dawid1977spherical}---they establish a universal feature selection framework characterized by the singular value decomposition (SVD) of the canonical dependence matrix. The framework thus operates without any prior knowledge of latent attributes, rendering it suitable for diverse domains such as representation learning and multi-task learning \cite{bengio2013representation,caruana1997multitask}.

This paper addresses more practical scenarios where attribute structures may have directional preferences and only noisy observations $\hat{X}$ and $\hat{Y}$ are accessible. By introducing \emph{weak spherical symmetry} and demonstrating its practical significance, we develop a universal feature selection framework that extends the results of \cite[Theorem~1]{xu2022information} and \cite[Section~5.4]{huang2024universal} to noisy observations. Moreover, we show that our result recovers theirs as a special case, confirming that the proposed weak symmetry condition is sufficient for their results.

The main contributions of this paper are as follows:
\begin{itemize}
    \item A formal definition of $\delta$-spherical symmetry (Definition~\ref{def: delta spherical symmetric random matrix}) with its motivation and practical justifications.
    
    \item A universal feature selection framework with noisy observations and $\delta$-spherical symmetry (Theorem~\ref{the: main result}).
    
    \item A demonstration that our result recovers \cite[Proposition~5.10]{huang2024universal} as a special case (Remark~\ref{rem: exact is unnecessary}), illustrating the sufficiency of the proposed condition.
\end{itemize}

\section{Preliminaries}

Throughout this paper, we consider only discrete random variables with finite alphabets. For a matrix $\bm{A}$, denote by $\|\bm{A}\|_{\mathrm{F}}$ its Frobenius norm and $\bm{A}_{i,j}$ its $(i,j)$-th entry. Let $\stackrel{\text{d}}{=}$ denote equality in distribution.

We begin with some concepts in local information geometry. Let $Z$ be a random variable over alphabet $\mathcal{Z}$ with distribution $P_{Z}$. Let $\mathcal{P}^{\mathcal{Z}}$ be the space of distributions on $\mathcal{Z}$, and $\mathrm{rel}(\mathcal{P}^{\mathcal{Z}})$ be the subset of strictly positive distributions. Given $\epsilon > 0$, define the \emph{$\epsilon$-neighborhood} of $P_{Z} \in \mathrm{rel}(\mathcal{P}^{\mathcal{Z}})$ as
\begin{equation*}
    \mathcal{N}^{\epsilon}_{\mathcal{Z}} (P_{Z}) = \Big\{P \in \mathcal{P}^{\mathcal{Z}}: \sum_{z \in \mathcal{Z}} \frac{(P(z) - P_{Z}(z))^2}{P_{Z}(z)} \leq \epsilon^2 \Big\}.
\end{equation*}
A random variable $W$ is an \emph{$\epsilon$-attribute} of $Z$, if for any $w \in \mathcal{W}$, $P_{Z \mid W}(\cdot \mid w) \in \mathcal{N}^{\epsilon}_{\mathcal{Z}} (P_{Z})$. Each $\epsilon$-attribute $W$ of $Z$ is characterized by its \emph{configuration}
\begin{equation*}
    \mathcal{C}^{\epsilon}_{\mathcal{Z}}(P_{Z}) = \big\{\mathcal{W}, \{P_{W}(w), w \in \mathcal{W} \}, \{P_{Z \mid W}(\cdot \mid w), w \in \mathcal{W} \} \big\}.
\end{equation*}
Define the \emph{information matrix} $\bm{\Phi}^{Z \mid W}$ such that
\begin{equation*}
    \bm{\Phi}^{Z \mid W}_{i,j} = \frac{1}{\epsilon} \cdot \frac{P_{Z \mid W}(z_{i} \mid w_{j}) - P_{Z}(z_{i})}{\sqrt{P_{Z}(z_{i})}}.
\end{equation*}
The $j$-th column of $\bm{\Phi}^{Z \mid W}$, denoted as $\phi^{Z \mid W}_{w_{j}}$, is called the \emph{information vector} corresponding to $w_{j}$. For $k \in \mathbb{Z}^{+}$, given normalized feature functions $h^{k}=(h_1, \cdots, h_{k})^{\mathrm{T}}: \mathcal{Z} \to \mathbb{R}^{k}$ (i.e., $\mathbb{E}[h^{k}(Z)] = 0$ and $\mathbb{E}[h^{k}(Z) (h^{k}(Z))^{\mathrm{T}}] = \bm{I}$), define the corresponding \emph{feature vectors} $\psi_{i}^{Z} \leftrightarrow h_{i}$ by $\psi_{i}^{Z}(z) = \sqrt{P_{Z}(z)} \ h_{i}(z), i = 1, \dots, k$.

For a pair of random variables $X$ and $Y$ with alphabets $\mathcal{X}$ and $\mathcal{Y}$, respectively, set $K_{XY} = \min\{|\mathcal{X}|, |\mathcal{Y}|\}$. Define the \emph{canonical dependence matrix} $\tilde{\bm{B}}_{X,Y}$ as
\begin{equation*}
    (\tilde{\bm{B}}_{X,Y})_{j,i} = \frac{P_{X,Y}(x_{i}, y_{j}) - P_{X}(x_{i})P_{Y}(y_{j})}{\sqrt{P_{X}(x_{i})} \sqrt{P_{Y}(y_{j})}}.
\end{equation*}

\section{Problem Setup}

For random variables $X$, $Y$, \cite[Theorem 1]{xu2022information} and \cite[Section 5.4]{huang2024universal} consider the Markov chain $U \stackrel{(\epsilon)}{\leftrightarrow} X \leftrightarrow Y \stackrel{(\epsilon)}{\leftrightarrow} V$, where $U$ and $V$ are the $\epsilon$-attribute of $X$ and $Y$, respectively. Given i.i.d. observations $x^{N}=(x_1, \cdots, x_{N})$,   $y^{N}=(y_1, \cdots, y_{N})$ drawn from $P_{X,Y}$, the objective is to construct feature functions $f^k: \mathcal{X} \to \mathbb{R}^{k}$ and $g^k: \mathcal{Y} \to \mathbb{R}^{k}$ that enable inference of $U$ and $V$.

However, in many practical scenarios, direct access to $X$ and $Y$ is unavailable, providing only their noisy counterparts. This motivates us to consider the following generalized model,
\vspace{-0.2cm}
\begin{equation}
\label{eq: generalized Markov chain}
\begin{tikzcd}[column sep = 2em, row sep = 1.5em]
U \arrow[r, leftrightarrow, "(\epsilon)"] & X \arrow[r, leftrightarrow] \arrow[d, leftrightarrow] & Y \arrow[r, leftrightarrow, "(\epsilon)"] \arrow[d, leftrightarrow] & V \\
& \hat{X} & \hat{Y} & 
\end{tikzcd}
\end{equation}
where $\hat{X}$, $\hat{Y}$ are the noisy versions of $X$, $Y$ over the same alphabets $\mathcal{X}$, $\mathcal{Y}$, and $U$, $V$ are the $\epsilon$-attributes of $X$, $Y$, respectively, each characterized by configurations $\mathcal{C}^{\epsilon}_{\mathcal{X}}(P_{X})$ and $\mathcal{C}^{\epsilon}_{\mathcal{Y}}(P_{Y})$. Similarly as in \cite{huang2024universal}, we assume the collection of these configurations are equipped with probability measures $\mu_{U}$ and $\mu_{V}$, respectively. The noisy channels $X \to \hat{X}$ and $Y \to \hat{Y}$ are assumed to have the following transition probability matrices:
\begin{equation}
\label{eq: channel transfer matrices}
    \bm{P}_{\hat{X} \mid X} = \bm{I} + \eta_1 \bm{T}_{\hat{X} \mid X}, \quad \bm{P}_{\hat{Y} \mid Y} = \bm{I} + \eta_2 \bm{T}_{\hat{Y} \mid Y},
\end{equation}
for some $\eta_1, \eta_2 \geq 0$, where each column of $\bm{T}_{\hat{X} \mid X}$ and $\bm{T}_{\hat{Y} \mid Y}$ sums to zero. For $k \in \{1, \dots, K_{XY}-1\}$, select normalized feature functions $f^{k}: \mathcal{X} \to \mathbb{R}^{k}$ and $g^{k}: \mathcal{Y} \to \mathbb{R}^{k}$. Suppose $\hat{x}^{N}=\{\hat{x}_1, \dots, \hat{x}_{N}\}$ and $\hat{y}^{N}=\{\hat{y}_1, \dots, \hat{y}_{N}\}$ are i.i.d. samples drawn from $P_{\hat{X}, \hat{Y}}$. Define the statistics $\hat{S}^{k}$ and $\hat{T}^{k}$ as the empirical expectations of $f^{k}(\hat{X})$ and $g^{k}(\hat{Y})$, respectively. The error probabilities for inferring $U$ from $\hat{S}^{k}$, $V$ from $\hat{S}^{k}$, $U$ from $\hat{T}^{k}$, and $V$ from $\hat{T}^{k}$ are denoted by $p_{e}^{U|\hat{S}}(f^{k}, \mathcal{C}^{\epsilon}_{\mathcal{X}}(P_{X}))$, $p_{e}^{V|\hat{S}}(f^{k}, \mathcal{C}^{\epsilon}_{\mathcal{Y}}(P_{Y}))$, $p_{e}^{U|\hat{T}}(g^{k}, \mathcal{C}^{\epsilon}_{\mathcal{X}}(P_{X}))$, and $p_{e}^{V|\hat{T}}(g^{k}, \mathcal{C}^{\epsilon}_{\mathcal{Y}}(P_{Y}))$, respectively. The (average) error exponents are then defined as follows:
\begin{align*}
    & \bar{E}^{U \mid \hat{S}}(f^{k}) = \lim\limits_{N \to \infty} - \frac{\mathbb{E}_{\mu_{U}} \big[\log p_{e}^{U|\hat{S}}(f^{k}, \mathcal{C}^{\epsilon}_{\mathcal{X}}(P_{X})) \big]}{N}, \\ & \bar{E}^{V \mid \hat{S}}(f^{k}) = \lim\limits_{N \to \infty} - \frac{\mathbb{E}_{\mu_{V}} \big[\log p_{e}^{V|\hat{S}}(f^{k}, \mathcal{C}^{\epsilon}_{\mathcal{Y}}(P_{Y})) \big]}{N}, \\ & \bar{E}^{U \mid \hat{T}}(g^{k}) = \lim\limits_{N \to \infty} - \frac{\mathbb{E}_{\mu_{U}} \big[\log p_{e}^{U|\hat{T}}(g^{k}, \mathcal{C}^{\epsilon}_{\mathcal{X}}(P_{X})) \big]}{N}, \\ & \bar{E}^{V \mid \hat{T}}(g^{k}) = \lim\limits_{N \to \infty} - \frac{\mathbb{E}_{\mu_{V}} \big[\log p_{e}^{V|\hat{T}}(g^{k}, \mathcal{C}^{\epsilon}_{\mathcal{Y}}(P_{Y})) \big]}{N}.
\end{align*}
This paper aims to seek features $f^{k}$ and $g^{k}$ that maximize the above error exponents, thereby achieving universal effectiveness across a diverse range of inference tasks.

Non‑trivial probability measures $\mu_U$ and $\mu_V$ render the information matrices $\Phi^{X \mid U}$ and $\Phi^{Y \mid V}$ random. For random matrices, a notion of spherical symmetry (or rotational invariance) has been defined \cite{chmielewski1981elliptically, dawid1977spherical}, which is widely regarded as an extension of isotropy \cite[Section 3.2.2]{vershynin2020high}.

\begin{definition}[Spherical symmetry]
\label{def:Spherical symmetric random matrix}
    A random matrix $\bm{A} \in \mathbb{R}^{n \times m}$ is \emph{spherically symmetric}, if for any deterministic orthogonal matrices $\bm{Q}_1 \in \mathbb{R}^{n \times n}$ and $\bm{Q}_2 \in \mathbb{R}^{m \times m}$, we have $\bm{Q}_1^{\mathrm{T}} \bm{A} \bm{Q}_2 \stackrel{\text{d}}{=} \bm{A}$.
\end{definition}

In the absence of prior information about $U$ and $V$, \cite[Section~5.4]{huang2024universal} assumes that $\mu_U$ and $\mu_V$ have no directional preference, i.e., $\Phi^{X \mid U}$ and $\Phi^{Y \mid V}$ are spherically symmetric. Notice that setting $\eta_1 = \eta_2 = 0$ yields $\hat{X} = X$, $\hat{Y} = Y$, thereby the system \eqref{eq: generalized Markov chain} boils down to the original Markov chain where $X$, $Y$ are respectively replaced by $\hat{X}$, $\hat{Y}$, i.e.,
\begin{equation}
\label{eq: Markov chain}
\begin{tikzcd}[sep = 2em]
U \arrow[r, leftrightarrow, "(\epsilon)"] & \hat{X} \arrow[r, leftrightarrow] & \hat{Y} \arrow[r, leftrightarrow, "(\epsilon)"]  & V.
\end{tikzcd}
\end{equation}
The result of \cite[Proposition 5.10]{huang2024universal} is reformulated below in Theorem \ref{the: Huang Proposition 5.10} using our notation.

\begin{theorem}[Huang et al. 2024]
\label{the: Huang Proposition 5.10}
    Consider the Markov chain \eqref{eq: Markov chain}. For $k \in \{1, \dots, K_{XY}-1 \}$, select normalized feature functions $f^{k}: \mathcal{X} \to \mathbb{R}^{k}$ and $g^{k}: \mathcal{Y} \to \mathbb{R}^{k}$. Assume under each probability measures, both $\bm{\Phi}^{X \mid U}$ and $\bm{\Phi}^{Y \mid V}$ are spherically symmetric. Then,
    \begin{equation}
    \label{eq: Huang Proposition 5.10 equation}
    \begin{aligned}
    & \big( \bar{E}^{U \mid \hat{S}}(f^{k}), \bar{E}^{V \mid \hat{S}}(f^{k}), \bar{E}^{U \mid \hat{T}}(g^{k}), \bar{E}^{V \mid \hat{T}}(g^{k}) \big) \leq \\ & \big( C_{U} \epsilon^2 k, \ C_{V} \epsilon^2 \sum_{i=1}^{k} \sigma_{i}^2, \ C_{U} \epsilon^2 \sum_{i=1}^{k} \sigma_{i}^2, \ C_{V} \epsilon^2 k \big) + \mathrm{o}(\epsilon^2),
    \end{aligned}
    \end{equation}
    as $\epsilon \to 0$, where $C_{U}$ and $C_{V}$ are positive constants independent of $\epsilon$, $k$ or $P_{\hat{X},\hat{Y}}$, and $\sigma_{i}$ is the $i$-th largest singular value of $\tilde{\bm{B}}_{\hat{X}, \hat{Y}}$. Moreover, the equality holds when the feature vectors corresponding to $f_{i}$ and $g_{i}$ are precisely the right and left singular vectors of $\tilde{\bm{B}}_{\hat{X}, \hat{Y}}$ corresponding to $\sigma_{i}$, respectively.
\end{theorem}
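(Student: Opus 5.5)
We reduce each error exponent to a quadratic form in the feature vectors via local information geometry, and then evaluate its expectation under spherical symmetry. Write $\bm{\Psi}^{X} = [\psi^{X}_1,\dots,\psi^{X}_k]$ and $\bm{\Psi}^{Y} = [\psi^{Y}_1,\dots,\psi^{Y}_k]$ for the feature matrices of $f^k$ and $g^k$. Normalization gives $(\bm{\Psi}^X)^{\mathrm{T}}\bm{\Psi}^X=\bm{I}$ and orthogonality to $\sqrt{P_X}$, and likewise for $\bm{\Psi}^Y$.

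\textbf{Step 1: exponent of a single configuration.} Fix a configuration. By Sanov's theorem together with the local (Euclidean) approximation of KL divergence in an $\epsilon$-neighborhood, testing between $u$ and $u'$ from the empirical mean $\hat S^k$ has Chernoff exponent
\[
\tfrac{\epsilon^2}{8}\big\|(\bm{\Psi}^X)^{\mathrm{T}}(\phi^{X|U}_u-\phi^{X|U}_{u'})\big\|^2+\mathrm{o}(\epsilon^2).
\]
This holds because the empirical mean of $f^k$ is asymptotically Gaussian with identity covariance, and the mean shift is $\epsilon(\bm{\Psi}^X)^{\mathrm{T}}\phi_u$. The overall error exponent is then the minimum over pairs $u\neq u'$.

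For inferring $V$ from $\hat S$, the Markov chain $V\to Y\to X$ transports information vectors. Under $P_{X|V=v}$ the information vector of $X$ is $\tilde{\bm{B}}_{X,Y}^{\mathrm{T}}\phi^{Y|V}_v$, up to the orientation convention. The exponent therefore becomes
\[
\tfrac{\epsilon^2}{8}\min_{v\ne v'}\big\|(\bm{\Psi}^X)^{\mathrm{T}}\tilde{\bm{B}}^{\mathrm{T}}(\phi_v-\phi_{v'})\big\|^2 .
\]
The $\hat T$ cases follow symmetrically using $\tilde{\bm{B}}$.

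\textbf{Step 2: average over $\mu$.} We take $\mathbb{E}_{\mu}$ of $\min_{u\ne u'}$ of the quadratic form. By spherical symmetry, $\bm{\Phi}\stackrel{\text{d}}{=}\bm{Q}\bm{\Phi}$ for any orthogonal $\bm{Q}$ fixing $\sqrt{P_X}$, which is the relevant restriction since columns of $\bm{\Phi}$ are orthogonal to $\sqrt{P_X}$. Choose $\bm{Q}$ so that $\bm{Q}^{\mathrm{T}}\bm{\Psi}^X$ maps to the first $k$ coordinate directions. Then
\[
\mathbb{E}\min\|\bm{\Psi}^{\mathrm{T}}\Delta\phi\|^2=\mathbb{E}\min\|\bm{M}\bm{\Delta}\phi\|^2,
\]
and for a general matrix $\bm{M}$ this depends only on the singular values of $\bm{M}$. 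For $\bm{M}=(\bm{\Psi}^X)^{\mathrm{T}}$ all singular values equal $1$, which yields a function $c(k)$. The paper's claim $C_Uk$ requires this to be linear in $k$. I would obtain linearity by rewriting the expected exponent as $C_U\,\mathbb{E}\,\mathrm{tr}(\bm{M}^{\mathrm{T}}\bm{M}\,\cdot)$, which equals $C_U\|\bm{M}\|_F^2$ once isotropy is used. Here $C_U$ is defined as the normalized constant coming from $\mu_U$ (it is independent of $\epsilon$, $k$, and $P$). This is the step I expect to be the \emph{main obstacle}: justifying that the min over pairs, averaged under the rotation-invariant measure, scales with $\|\bm{M}\|_F^2$. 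I would follow Huang et al.'s definition of $C_U$ through this trace identity rather than handle the min directly.

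\textbf{Step 3: optimization.} For $V|\hat S$ we have $\bm{M}=(\bm{\Psi}^X)^{\mathrm{T}}\tilde{\bm{B}}^{\mathrm{T}}$, so the exponent is $C_V\epsilon^2\|\tilde{\bm{B}}\bm{\Psi}^X\|_F^2$. Since $\bm{\Psi}^X$ has orthonormal columns orthogonal to $\sqrt{P_X}$, which spans the kernel direction of $\tilde{\bm{B}}$, Ky Fan's maximum principle gives
\[
\|\tilde{\bm{B}}\bm{\Psi}^X\|_F^2\le\sum_{i=1}^k\sigma_i^2 ,
\]
with equality when the columns of $\bm{\Psi}^X$ are the top right singular vectors. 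The $U|\hat T$ case is the same with left singular vectors. The bounds $C_U\epsilon^2k$ and $C_V\epsilon^2k$ are attained by any normalized features, so the SVD choice achieves all four bounds simultaneously.
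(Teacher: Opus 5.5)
Your outline follows the paper's route. That route is: the pairwise exponent of Lemma~\ref{lem: Huang 4.12} at the least distinguishable pair, transport of information vectors by $\tilde{\bm{B}}$ or $\tilde{\bm{B}}^{\mathrm{T}}$, an isotropy identity, and Ky Fan's principle for $\|\tilde{\bm{B}}\bm{\Psi}\|_{\mathrm{F}}^2\le\sum_{i\le k}\sigma_i^2$. The gap is Step~2, which you flag but leave open. ``Rewriting the expected exponent as $C_U\,\mathbb{E}\,\mathrm{tr}(\bm{M}^{\mathrm{T}}\bm{M}\,\cdot)$'' is not an argument, because the averaged quantity $\min_{u\neq u'}\|\bm{M}(\phi_u-\phi_{u'})\|^2$ is not linear in $\bm{M}^{\mathrm{T}}\bm{M}$.

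The relevant comparison is the paper's proof of Theorem~\ref{the: main result}, which reduces to this statement at $\delta=\eta_1=\eta_2=0$. There the paper writes the exponent as $\frac{\epsilon^2}{8}\mathbb{E}_{\mu_U}\|\bm{G}^{\mathrm{T}}\bm{\Phi}^{X\mid U}\bm{H}\|_{\mathrm{F}}^2$, with $\bm{G}=\bm{\Psi}^{X}$ (resp.\ $\tilde{\bm{B}}^{\mathrm{T}}\bm{\Psi}^{Y}$) and $\bm{H}=\bm{e}_{u^*}-\bm{e}_{u^{*\prime}}$, so $\|\bm{H}\|^2=2$. It then applies Lemma~\ref{lem: delta_ss_bound} with $\delta=0$. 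This is essentially your trace identity, and it gives $C_U=\mathbb{E}_{\mu_U}\|\bm{\Phi}^{X\mid U}\|_{\mathrm{F}}^2/(4|\mathcal{X}||\mathcal{U}|)$. However, that lemma is stated for deterministic $\bm{H}$, whereas the least distinguishable pair is a function of $\bm{\Phi}^{X\mid U}$ and of the features. So you are not missing an idea the paper supplies; the paper does not justify this step either.

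Your worry is well founded. Write $X_{i,p}$ for the squared $i$-th coordinate of $\bm{\Psi}^{\mathrm{T}}(\phi_u-\phi_{u'})$ at the pair $p=(u,u')$. Then $\min_p\sum_{i\le k}X_{i,p}\ge\sum_{i\le k}\min_p X_{i,p}$, so your $c(k)$ is superadditive. It is strictly superadditive whenever the coordinatewise minimizing pairs differ with positive probability.

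A concrete instance: take $|\mathcal{X}|=|\mathcal{U}|=3$ (and $|\mathcal{Y}|\ge 3$ so that $k=2$ is allowed), with $P_U$ uniform. Let $\phi_1,\phi_2,\phi_3$ be the vertices of a small centred equilateral triangle of edge length $s$ in $\sqrt{P_X}^{\perp}$, moved by a Haar-random orthogonal map of that plane. Then $c(2)=s^2$, while $c(1)=(\tfrac12-\tfrac{3\sqrt3}{4\pi})s^2\approx 0.087\,s^2$. So no $k$-independent $C_U$ gives $\bar{E}^{U\mid\hat{S}}(f^k)=C_U\epsilon^2k+\mathrm{o}(\epsilon^2)$ for both $k=1$ and $k=2$.

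What your plan can establish rigorously is the inequality part. Bound the minimum by its value at a pair $(u_0,u_0')$ fixed in advance, apply the isotropy identity to that deterministic $\bm{H}$, and finish with Ky Fan. This yields all four upper bounds with constants independent of $\epsilon$, $k$ and $P_{X,Y}$, and shows that the SVD features maximize the right-hand side. The equality claims hold only when the exponent refers to a single pair not selected from $\bm{\Phi}^{X\mid U}$, for example binary $U$ and $V$. This includes your assertion that every normalized $f^k$ attains $C_U\epsilon^2k$. To obtain them you must restrict the setting or the definition of $p_e$ accordingly.

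Finally, your use of invariance only under rotations fixing $\sqrt{P_X}$ is the consistent reading. Definition~\ref{def:Spherical symmetric random matrix} with all orthogonal $\bm{Q}_1$ would force $\bm{\Phi}^{X\mid U}=\bm{0}$, since $\sqrt{P_X}^{\mathrm{T}}\bm{\Phi}^{X\mid U}=\bm{0}$ identically.
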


\section{Universal Feature Selection Framework}

While it is natural to assume that the distributions of $\bm{\Phi}^{X \mid U}$ and $\bm{\Phi}^{Y \mid V}$ exhibit no directional preference, the following lemma shows that exact spherical symmetry is rather restrictive and therefore challenging to achieve.

\begin{lemma}
\label{lem: 3 properties of ss}
    Let $\bm{A}$ be a random matrix. If $\bm{A}$ is spherically symmetric, then,
    \begin{enumerate}[label=(\arabic*)]
        \item\label{enu: lem_3 properties_1} 0-mean: $\mathbb{E}[\bm{A}]=\bm{0}_{n \times m}$.
        
        \item\label{enu: lem_3 properties_2} Identical distribution: For any $i,j,k,l$, $\bm{A}_{i,j} \stackrel{\text{d}}{=} \bm{A}_{k,l}$.
        
        \item\label{enu: lem_3 properties_3} Uncorrelated: For $(i,j) \neq (k,l)$, $\mathrm{Cov}(\bm{A}_{i,j}, \bm{A}_{k,l}) = 0$.
    \end{enumerate}
\end{lemma}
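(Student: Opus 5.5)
Each property comes from applying Definition~\ref{def:Spherical symmetric random matrix} with a suitably chosen pair of deterministic orthogonal matrices $\bm{Q}_1,\bm{Q}_2$ and reading off the consequence entrywise. Throughout, I will assume the moments involved exist. This is automatic in our setting, because the information matrices $\bm{\Phi}^{X\mid U}$ and $\bm{\Phi}^{Y\mid V}$ have entries bounded by the $\epsilon$-neighborhood constraint. In general one should add a finite-first-moment assumption for (1) and a finite-second-moment assumption for (3).

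For \ref{enu: lem_3 properties_1}, take $\bm{Q}_1=-\bm{I}_n$ and $\bm{Q}_2=\bm{I}_m$. Then $-\bm{A}\stackrel{\text{d}}{=}\bm{A}$, so $\mathbb{E}[\bm{A}]=-\mathbb{E}[\bm{A}]$, which gives $\mathbb{E}[\bm{A}]=\bm{0}_{n\times m}$.

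For \ref{enu: lem_3 properties_2}, let $\bm{Q}_1$ be the permutation matrix transposing $i$ and $k$, and let $\bm{Q}_2$ be the permutation matrix transposing $j$ and $l$. Both are orthogonal. Since $(\bm{Q}_1^{\mathrm{T}}\bm{A}\bm{Q}_2)_{i,j}=\bm{A}_{k,l}$, and equality in distribution of matrices implies equality of the corresponding marginals, we get $\bm{A}_{k,l}\stackrel{\text{d}}{=}\bm{A}_{i,j}$.

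For \ref{enu: lem_3 properties_3}, the means vanish by (1), so it suffices to show $\mathbb{E}[\bm{A}_{i,j}\bm{A}_{k,l}]=0$ when $(i,j)\neq(k,l)$. Without loss of generality $i\neq k$; the case $j\neq l$ is symmetric, using $\bm{Q}_2$ instead. Take $\bm{Q}_1=\mathrm{diag}(1,\dots,-1,\dots,1)$ with the $-1$ in position $i$, and $\bm{Q}_2=\bm{I}_m$. This reflection flips the sign of row $i$ and fixes row $k$. Hence $(\bm{A}_{i,j},\bm{A}_{k,l})\stackrel{\text{d}}{=}(-\bm{A}_{i,j},\bm{A}_{k,l})$, which gives $\mathbb{E}[\bm{A}_{i,j}\bm{A}_{k,l}]=-\mathbb{E}[\bm{A}_{i,j}\bm{A}_{k,l}]=0$.

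No step is a real obstacle. The only care needed is, in (3), to pick the reflection on the index (row or column) where the two positions actually differ, and to note explicitly that joint equality in distribution transfers to pairs of entries.
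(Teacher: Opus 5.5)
Your proposal is correct and matches the paper's proof essentially step for step: $\bm{Q}_1=-\bm{I}_n$ gives the zero mean, transposition permutation matrices give identical marginals, and a single-coordinate reflection kills $\mathbb{E}[\bm{A}_{i,j}\bm{A}_{k,l}]$ (on rows when $i\neq k$, on columns when $j\neq l$). You also state explicitly that finite first and second moments are needed, and that they hold here because $\sum_i (\bm{\Phi}_{i,j})^2\le 1$; the paper leaves this implicit, so it is a small but worthwhile addition.
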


\begin{IEEEproof}
    In Definition \ref{def:Spherical symmetric random matrix}, setting $\bm{Q}_1 = -\bm{I}_{n}$, $\bm{Q}_2 = \bm{I}_{m}$, we have $\mathbb{E}[\bm{A}]=\bm{0}_{n \times m}$. For any $i,j,k,l$, setting $\bm{Q}_1=\bm{Q}_{ik}^{\mathrm{T}}$, $\bm{Q}_2=\bm{Q}_{jl}$ to be the permutation matrices, we have $\bm{A}_{i,j} \stackrel{\text{d}}{=} \bm{A}_{k,l}$. Setting $\bm{Q}_1 = \bm{D}_{n}^{i} \in \mathbb{R}^{n \times n}$, the diagonal matrix whose $i$-th diagonal entry is $-1$ and the rest are $1$, and $\bm{Q}_2 = \bm{I}_{m}$, we have $\forall i \neq k$, $\mathbb{E}[\bm{A}_{i,j} \bm{A}_{k,l}] = 0$. Similarly, $\forall j \neq l$, $\mathbb{E}[\bm{A}_{i,j} \bm{A}_{k,l}] = 0$. We then derive $\mathrm{Cov}(\bm{A}_{i,j}, \bm{A}_{k,l}) = 0, \forall (i,j) \neq (k,l)$.
\end{IEEEproof}

Given a random variable $Z$ and $\epsilon > 0$, assume $W$ is an $\epsilon$-attribute of $Z$. Then, for any $w \in \mathcal{W}$, $P_{Z \mid W}(\cdot \mid w) \in \mathcal{N}^{\epsilon}_{\mathcal{Z}} (P_{Z})$. Expanding the log-likelihood ratio gives
\[
\log \frac{P_{Z \mid W}(z \mid w)}{P_{Z}(z)} = \epsilon \frac{\phi_{w}(z)}{\sqrt{P_{Z}(z)}} - \frac{\epsilon^2}{2} \big( \frac{\phi_{w}(z)}{\sqrt{P_{Z}(z)}} \big)^2 + \mathrm{o}(\epsilon^2),
\]
where $\phi_{w}$ denotes the information vector corresponding to $w$. For any smooth one-parameter family $\{w(\theta)\}$ passing through $w$ at $\theta = 0$, the score function \cite[Section 11.10]{cover2012elements} satisfies
\[
\frac{\partial}{\partial \theta} \log P(z \mid w(\theta)) \bigg|_{\theta = 0} = \epsilon \frac{\phi'_{w}(z)}{\sqrt{P_{Z}(z)}} + \mathrm{O}(\epsilon^2),
\]
where $\phi'_{w}$ denotes the derivative of $\phi_{w}$ in the direction of the parameter. Accordingly, the Fisher information becomes $I(0) = \epsilon^2 \|\phi'_{w}\|^2 + \mathrm{O}(\epsilon^3)$. Hence, as $\epsilon \to 0$, the term $\|\phi'_{w}\|^2$ dominates the Fisher information and, by the Cramér–Rao lower bound \cite[Section 11.10]{cover2012elements}, governs the mean-square error of estimation. Notice the behavior of $\|\phi'_w\|^2$ under rotations, in terms of its contribution to the Fisher information and hence to the error probability, is entirely determined by the second-moment structure of $\phi_w$. This observation implies that, in the local regime, imposing spherical symmetry can be reduced to a second-moment symmetry condition on the random matrix. The sufficiency of this condition is further illustrated in Remark~\ref{rem: exact is unnecessary}.

\begin{definition}[Weak spherical symmetry]
\label{def: delta spherical symmetric random matrix}
    For random matrices $\bm{X}, \bm{Y} \in \mathbb{R}^{n \times m}$, define a second-moment distance as
    \[
    \mathrm{D}(\bm{X}, \bm{Y})= \sup_{\substack{u \in \mathbb{R}^{n}, \|u\|=1 \\ v \in \mathbb{R}^{m}, \|v\|=1}} 
    \Big| \mathbb{E} \big[ (u^{\mathrm{T}} \bm{X} v)^2 \big] - \mathbb{E} \big[(u^{\mathrm{T}} \bm{Y} v)^2 \big] \Big|.
    \]
    Given $\delta \geq 0$, a random matrix $\bm{A} \in \mathbb{R}^{n \times m}$ is \emph{$\delta$-spherically symmetric}, if for any deterministic orthogonal matrices $\bm{Q}_1 \in \mathbb{R}^{n \times n}$ and $\bm{Q}_2 \in \mathbb{R}^{m \times m}$, we have $\mathrm{D}(\bm{Q}_1^{\mathrm{T}} \bm{A} \bm{Q}_2, \bm{A}) \leq \delta$.
\end{definition}

While exact spherical symmetry trivially implies $\delta$-spherical symmetry, the latter serves as a more practical and general relaxation, as illustrated below.

\begin{example}
\label{exa: A deltass not ss}
    Given $0 < |\sigma^2-1| \leq \delta $. A random matrix $\bm{A} \in \mathbb{R}^{2 \times 2}$ has independent entries $\bm{A}_{1,1} \sim N(0, \sigma^2)$ and $\bm{A}_{1,2}, \bm{A}_{2,1}, \bm{A}_{2,2} \sim N(0, 1)$. Then, $\bm{A}$ is $\delta$-spherically symmetric, but not exactly spherically symmetric.
\end{example}

\begin{IEEEproof}
    Selecting unit vectors $u=(u_1, u_2)^{\mathrm{T}}$ and $v=(v_1, v_2)^{\mathrm{T}}$, we have $\mathbb{E}[(u^{\mathrm{T}} \bm{A} v)^2] = 1+(\sigma^2 -1) u_1^2 v_1^2$. For any orthogonal matrices $\bm{Q}_1$ and $\bm{Q}_2 \in \mathbb{R}^{2 \times 2}$, setting $\bm{Q}_1 u=(\tilde{u}_1, \tilde{u}_2)^{\mathrm{T}}$, $\bm{Q}_2 v=(\tilde{v}_1, \tilde{v}_2)^{\mathrm{T}}$, we have $\mathbb{E}[(u^{\mathrm{T}} \bm{Q}_1^{\mathrm{T}} \bm{A} \bm{Q}_2 v)^2] = 1+(\sigma^2 -1) \tilde{u}_1^2 \tilde{v}_1^2$. Then,
    \[
    \mathrm{D} \big(\bm{Q}_{1}^{\mathrm{T}} \bm{A} \bm{Q}_{2}, \bm{A} \big)  = \big| (\sigma^2 -1)(\tilde{u}_1^2 \tilde{v}_1^2 - u_1^2 v_1^2) \big| \leq |\sigma^2 -1| \leq \delta.
    \]
    which indicates $\bm{A}$ is $\delta$-spherically symmetric. The fact that $\bm{A}$ is not exactly spherically symmetric follows immediately from Lemma \ref{lem: 3 properties of ss} \ref{enu: lem_3 properties_2}.
\end{IEEEproof}

In scenarios where no prior information about $U$ and $V$ is available, it is natural to assume the attribute structures satisfy $\delta$-spherical symmetry, with a deviation bound $\delta \geq 0$ which can be chosen task-dependently; that is, the attribute structures may exhibit directional preference, but not excessively.

The main result of this paper is the following theorem, which extends the universal feature selection framework of \cite[Section 5.4]{huang2024universal} to more general scenarios, allowing directional preference in attribute structures and noise in observed data.

\begin{theorem}
\label{the: main result}
    Consider the generalized model \eqref{eq: generalized Markov chain}, in which $\hat{X}$ and $\hat{Y}$ are generated via the channel \eqref{eq: channel transfer matrices} characterized by $\eta_1$, $\eta_2$. For $k \in \{1, \dots, K_{XY}-1 \}$, select normalized feature functions $f^{k}: \mathcal{X} \to \mathbb{R}^{k}$ and $g^{k}: \mathcal{Y} \to \mathbb{R}^{k}$. Given $\delta \geq 0$, assume under each probability measures, both $\bm{\Phi}^{X \mid U}$ and $\bm{\Phi}^{Y \mid V}$ are $\delta$-spherically symmetric. Then,
    \begin{equation}
    \label{eq: main result equation}
    \begin{aligned}
    & \big( \bar{E}^{U \mid \hat{S}}(f^{k}), \bar{E}^{V \mid \hat{S}}(f^{k}), \bar{E}^{U \mid \hat{T}}(g^{k}), \bar{E}^{V \mid \hat{T}}(g^{k}) \big) \leq \\
    & \big( C_{U} \epsilon^2 k, C_{V} \epsilon^2 \sum_{i=1}^{k} \sigma_{i}^2, C_{U} \epsilon^2 \sum_{i=1}^{k} \sigma_{i}^2, C_{V} \epsilon^2 k \big) + R(\epsilon, \delta, \eta_1, \eta_2),
    \end{aligned}
    \end{equation}
    as $\epsilon \to 0$, where
    \begin{equation*}
    \small
        R(\epsilon, \delta, \eta_1, \eta_2) = \mathrm{O} \Big( \epsilon^2 \cdot \max \big\{\delta+\eta_1+\delta \eta_1, \delta+\eta_2+\delta \eta_2 \big\} \Big),
    \end{equation*}
    and $C_{U}$ and $C_{V}$ are positive constants independent of $\epsilon$, $\delta$, $\eta_1$, $\eta_2$, $k$ or $P_{\hat{X}, \hat{Y}}$, and $\sigma_{i}$ is the $i$-th largest singular value of $\tilde{\bm{B}}_{\hat{X}, \hat{Y}}$. Moreover, the equality holds when the feature vectors corresponding to $f_{i}$ and $g_{i}$ are precisely the right and left singular vectors of $\tilde{\bm{B}}_{\hat{X}, \hat{Y}}$ corresponding to $\sigma_{i}$, respectively.
\end{theorem}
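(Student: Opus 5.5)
The plan is to reduce the statement to the computation behind Theorem~\ref{the: Huang Proposition 5.10}, and then track two separate perturbations: the deviation from exact spherical symmetry, and the noise channels. First we recall the local error-exponent formula from \cite{huang2024universal}. For a fixed configuration, inferring $U$ from the empirical mean $\hat S^k$ of $f^k(\hat X)$ has an error exponent that, to order $\epsilon^2$, is a constant multiple of a quadratic form. This quadratic form is in the projected information vectors $\bm{\Psi}_{\hat X}^{\mathrm{T}}\bm{\Phi}^{\hat X\mid U}$, where $\bm{\Psi}_{\hat X}=[\psi^{\hat X}_1,\dots,\psi^{\hat X}_k]$; concretely it is the minimum over pairs $u\neq u'$ of $\tfrac{1}{8}\|\bm{\Psi}_{\hat X}^{\mathrm{T}}(\phi_u-\phi_{u'})\|^2$, up to $\mathrm{o}(\epsilon^2)$. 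In the cross case ($V$ from $\hat S^k$) the information vectors are first propagated through the Markov chain. Since $P_{\hat X\mid V}$ is obtained from $P_{Y\mid V}$ by the channel $Y\to X\to \hat X$, the relevant information matrix is $\tilde{\bm B}_{\hat X,\hat Y}^{\mathrm{T}}$-type applied to $\bm{\Phi}^{\hat Y\mid V}$. Following \cite{huang2024universal}, we take $\mu_U$-expectations and write each average exponent as $C_U\epsilon^2\,\mathbb{E}\|\bm{\Psi}^{\mathrm{T}}\bm{M}\bm{\Phi}\|_{\mathrm F}^2$ (or an upper bound thereof), plus $\mathrm{o}(\epsilon^2)$. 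Here $\bm M$ is either $\bm I$ or the relevant dependence matrix, and $C_U,C_V$ are the normalizing constants of \cite{huang2024universal}, which absorb the second moment of an entry under exact symmetry.

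Second, we handle the noise. The channel \eqref{eq: channel transfer matrices} gives $P_{\hat X\mid U}=\bm P_{\hat X\mid X}P_{X\mid U}$ and $P_{\hat X}=\bm P_{\hat X\mid X}P_X$. From this we derive $\bm{\Phi}^{\hat X\mid U}=\bm{\Phi}^{X\mid U}+\eta_1\bm{E}_1\bm{\Phi}^{X\mid U}$ with $\|\bm E_1\|$ bounded uniformly: the square-root normalizations change by $\mathrm{O}(\eta_1)$ and the columns of $\bm T$ sum to zero, which preserves the zero-mean property. The same holds for $Y$ with $\eta_2$. Thus $\bm{\Phi}^{\hat X\mid U}=(\bm I+\eta_1\bm E_1)\bm{\Phi}^{X\mid U}$. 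The exponents are expressed through $\tilde{\bm B}_{\hat X,\hat Y}$, whose singular values $\sigma_i$ appear in the bound, so only the attribute side carries the perturbation.

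Third, we use the key lemma that converts $\delta$-symmetry into a second-moment statement. For a $\delta$-spherically symmetric $\bm A$ and any matrices $\bm G,\bm H$, we expand $\mathbb{E}\|\bm G\bm A\bm H\|_{\mathrm F}^2$ via SVDs of $\bm G$ and $\bm H$ as a sum $\sum_{i,j} s_i^2 t_j^2\,\mathbb{E}[(u_i^{\mathrm T}\bm A v_j)^2]$. Definition~\ref{def: delta spherical symmetric random matrix} implies that each such term equals a common constant $c$ (take $u=e_1,v=e_1$ as reference, rotated by orthogonal maps) up to $\pm\delta$. Hence
\begin{equation*}
\mathbb{E}\|\bm G\bm A\bm H\|_{\mathrm F}^2=c\,\|\bm G\|_{\mathrm F}^2\|\bm H\|_{\mathrm F}^2+\mathrm{O}\big(\delta\,\|\bm G\|_{\mathrm F}^2\|\bm H\|_{\mathrm F}^2\big).
\end{equation*}
We apply this with $\bm A=\bm{\Phi}^{X\mid U}$ and $\bm G=\bm\Psi^{\mathrm T}(\bm I+\eta_1\bm E_1)$. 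Expanding gives the main term $c\|\bm\Psi\|_{\mathrm F}^2=ck$ and the cross-case term $c\sum_{i\le k}\|\tilde{\bm B}^{\mathrm T}\psi_i\|^2\le c\sum_{i\le k}\sigma_i^2$ by Ky Fan's inequality. The remaining error terms are $\mathrm{O}(\eta_1)$ from the noise, $\mathrm{O}(\delta)$ from the symmetry defect, and $\mathrm{O}(\delta\eta_1)$ from their product, all multiplied by $\epsilon^2$ and bounded uniformly because $k\le K_{XY}$ and the spectral norms are at most $1$. This yields $R$. Equality follows because for singular-vector features Ky Fan is tight and the main terms match exactly.

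The main obstacle is the first step. The bound in \cite{huang2024universal} for the average exponent is written as the expectation of a minimum over pairs, not of a squared Frobenius norm. Under exact symmetry they relate it to second moments with a fixed constant, and we must verify that this passage uses only second-moment information, uniformly in $\delta$. Our first attempt will be to check that their derivation expresses $C_U$ through $\mathbb{E}[(u^{\mathrm T}\bm\Phi v)^2]$ for a single direction, and to redo it with the reference value $c$ and error $\pm\delta$. If that fails, we will instead state the result for the second-moment surrogate exponent that the Fisher-information heuristic preceding the definition justifies.
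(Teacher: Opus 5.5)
Your architecture matches the paper's: the per-configuration exponent comes from the least distinguishable pair via Lemma~\ref{lem: Huang 4.12}, then a second-moment lemma (your SVD argument is Lemma~\ref{lem: delta_ss_bound}), the channel enters as an $\bm I+\mathrm{O}(\eta_1)$ factor, and Ky Fan finishes. However, the proof is incomplete exactly at the step you flag, and falling back to a surrogate exponent would change the theorem.

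\textbf{How the paper handles that step.} In \eqref{eq: proof_prop_noisy_4} it applies Lemma~\ref{lem: delta_ss_bound} with $\bm H=\bm e_{u^*}-\bm e_{u^{*'}}$, so $\|\bm H\|_{\mathrm F}^2=2$ and $m=|\mathcal U|$. This is where $C_U=\mathbb{E}_{\mu_U}\|\bm\Phi^{\hat X\mid U}\|_{\mathrm F}^2/(4|\mathcal X||\mathcal U|)$ comes from. It also shows the selector should be a pair difference, not $\bm I$ as in your $\mathbb{E}\|\bm\Psi^{\mathrm T}\bm M\bm\Phi\|_{\mathrm F}^2$.

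\textbf{Why your caution is justified.} $(u^*,u^{*'})$ depends on the realization of $\bm\Phi$, so $\bm H$ is random. The lemma is stated for deterministic $\bm H$ and does not literally apply.

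\textbf{The missing idea for the inequality.} The per-configuration exponent is a minimum over pairs, so it is at most its value at any fixed pair $(u_1,u_2)$. Hence
\[
\bar{E}^{U\mid\hat T}(g^k)\le\frac{\epsilon^2}{8}\,\mathbb{E}_{\mu_U}\Big[\big\|(\bm\Psi^{\hat Y})^{\mathrm T}\bm\Phi^{\hat Y\mid U}(\bm e_{u_1}-\bm e_{u_2})\big\|^2\Big]+\mathrm{o}(\epsilon^2).
\]
Now the selector is deterministic and your $\pm\delta$ lemma applies verbatim. Since your reference value $c$ is within $\delta$ of $\mathbb{E}\|\bm\Phi^{X\mid U}\|_{\mathrm F}^2/(|\mathcal X||\mathcal U|)$, you recover the paper's form of the constant up to $\mathrm{O}(\delta\epsilon^2)$. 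This needs $\mathcal U$ fixed across configurations, which is implicit in treating $\bm\Phi^{X\mid U}$ as a random matrix. It also needs, as the paper assumes, that limit, expectation and the $\mathrm{o}(\epsilon^2)$ can be interchanged.

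The price is the extra step $\mathbb{E}[\min]\le\min\mathbb{E}$, which is in general strict when $|\mathcal U|\ge 3$. Singular-vector features are then only shown to maximize the bound, not to attain it. The paper's equality claim rests on Ky Fan being the only inequality in its chain, which in turn rests on treating $(u^*,u^{*'})$ as fixed.

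\textbf{Where you genuinely depart from the paper: the noise.} The paper first shows $\bm\Phi^{\hat X\mid U}=\bm B_{X,\hat X}\bm\Phi^{X\mid U}$ is $\gamma_1$-spherically symmetric with $\gamma_1=\mathrm{O}(\delta+\eta_1+\delta\eta_1)$. It does this through Lemmas~\ref{lem: A deltass BA gammass} and~\ref{lem: remain gammass}, using the spread of the singular values of $\bm B_{X,\hat X}$, and then applies Lemma~\ref{lem: delta_ss_bound} to $\bm\Phi^{\hat X\mid U}$. You instead keep $\bm A=\bm\Phi^{X\mid U}$ and fold $\bm B_{X,\hat X}=\bm I+\eta_1\bm E_1$ into $\bm G$.

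Your route is shorter and avoids the eigenvalue argument for $\bm P_{\hat X\mid X}\bm P_{X\mid\hat X}$. It also gives a constant built from $\bm\Phi^{X\mid U}$, which is genuinely independent of $\eta_1$ as the statement requires; the paper's $C_U$ is built from $\bm\Phi^{\hat X\mid U}$. The paper's route buys a reusable fact: the noisy information matrix is itself weakly symmetric, so the noiseless argument runs on it unchanged.

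\textbf{The cross exponents.} You still need the content of Lemma~\ref{lem: Phi = B Phi}. In your setting it should read $\bm\Phi^{\hat Y\mid U}=\big(\tilde{\bm B}_{\hat X,\hat Y}(\bm I+\eta_1\bm E_1)+\mathrm{O}(\eta_1)\big)\bm\Phi^{X\mid U}$. This follows from $\bm B_{\hat X,\hat Y}=\bm B_{Y,\hat Y}\bm B_{X,Y}\bm B_{X,\hat X}^{\mathrm T}$ (uncentered analogues of $\bm B_{X,\hat X}$) together with $\bm B_{X,\hat X}^{\mathrm T}\bm B_{X,\hat X}=\bm I+\mathrm{O}(\eta_1)$. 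The point is that the error is a deterministic matrix multiplying $\bm\Phi^{X\mid U}$, so it costs only $\mathrm{O}(\eta_1)\,\mathbb{E}\|\bm\Phi^{X\mid U}\|_{\mathrm F}^2$. Replace your sentence that ``only the attribute side carries the perturbation'' with this identity.
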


\begin{remark}
\label{rem: exact is unnecessary}
    According to Theorem \ref{the: main result}, if $\delta, \eta_1, \eta_2 = \mathrm{o}(1)$ as $\epsilon \to 0$, we have $R(\epsilon, \delta, \eta_1, \eta_2) = \mathrm{o}(\epsilon^2)$. Then, \eqref{eq: main result equation} reduces to \eqref{eq: Huang Proposition 5.10 equation}, which indicates that the features selected above are asymptotically optimal and robust to the symmetry deviation $\delta$ and noise $\eta_1$, $\eta_2$. Specifically, setting $\eta_1 = \eta_2 = 0$, \eqref{eq: generalized Markov chain} boils down to \eqref{eq: Markov chain}. Therefore, Theorem \ref{the: main result} recovers \cite[Proposition 5.10]{huang2024universal} (Theorem \ref{the: Huang Proposition 5.10}), which reveals that the exact symmetry condition is unnecessary in this proposition, and a relatively small directional preference $\delta = \mathrm{o}(1)$ is tolerable. This observation further implies that the second-moment symmetry condition is sufficient for the validity of the original result.
\end{remark}

We shall require the following lemma on binary hypothesis testing, which is proved in \cite[Lemma 4.12]{huang2024universal}.

\begin{lemma}
\label{lem: Huang 4.12}
    Given a random variable $Z$ and $\epsilon > 0$, assume $W$ is an $\epsilon$-attribute of $Z$. For any $w_1, w_2 \in \mathcal{W}$, let $z^{N} = \{z_1, \dots, z_N\}$ be i.i.d. samples drawn from either $P_{Z \mid W}(\cdot \mid w_1)$ or $P_{Z \mid W}(\cdot \mid w_2)$. For normalized features $h^{k}: \mathcal{Z} \to \mathbb{R}^{k}$, define feature vectors $\psi^{Z}_{j} \leftrightarrow h_{j}, j=1,\cdots,k$, and statistics $l^{k}$ as the empirical expectation of $h^{k}$ under $z^{N}$. Let $p_{e}$ denote the error probability in deciding whether the generating distribution is $P_{Z \mid W}(\cdot \mid w_1)$ or $P_{Z \mid W}(\cdot \mid w_2)$. Then,
    \begin{equation*}
        \lim\limits_{N \to \infty} \frac{- \log p_{e}}{N} = \frac{\epsilon^2}{8} \sum_{j=1}^{k} \langle \phi^{Z \mid W}_{w_1} - \phi^{Z \mid W}_{w_2}, \psi^{Z}_{j} \rangle^2 + \mathrm{o}(\epsilon^2).
    \end{equation*}
\end{lemma}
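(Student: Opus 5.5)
The plan is to reduce the problem to a binary test on the sample mean $l^{k}$ and compute its Chernoff-type exponent via Cramér's theorem (equivalently, Sanov's theorem plus contraction). The exponent is then evaluated in the local regime, where every divergence becomes a quadratic form at scale $\epsilon$.

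\textbf{Step 1: conditional means and covariances of the features.} Write $P_{Z\mid W}(z\mid w_i)=P_Z(z)+\epsilon\sqrt{P_Z(z)}\,\phi^{Z\mid W}_{w_i}(z)$. Since $\mathbb{E}[h^k(Z)]=0$, this gives
$$m_i:=\mathbb{E}[h^{k}(Z)\mid W=w_i]=\epsilon\big(\langle\phi^{Z\mid W}_{w_i},\psi^Z_1\rangle,\dots,\langle\phi^{Z\mid W}_{w_i},\psi^Z_k\rangle\big)^{\mathrm T}.$$
Since $\mathbb{E}[h^k(Z)(h^k(Z))^{\mathrm T}]=\bm I$, the conditional covariance is $\Sigma_i=\bm I+\mathrm{O}(\epsilon)$. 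Because the alphabet is finite, $h^k$ is bounded, so the log-moment generating functions $\Lambda_i(\lambda)=\log\mathbb{E}[e^{\lambda^{\mathrm T}h^k(Z)}\mid w_i]$ are smooth. Their third derivatives are bounded uniformly in $\epsilon$.

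\textbf{Step 2: the exponent as a min--max of rate functions.} By Cramér's theorem in $\mathbb{R}^k$, $l^k$ under $w_i$ satisfies an LDP with rate $\Lambda_i^*$. For achievability, use the decision rule that compares $\Lambda_1^*(l^k)$ and $\Lambda_2^*(l^k)$. The upper bound in Cramér's theorem applied to the closed decision regions gives an exponent at least
$$E^\star=\inf_t\max\{\Lambda_1^*(t),\Lambda_2^*(t)\}.$$
For the converse, take any decision partition $\{A_1,A_2\}$ of $\mathbb{R}^k$. Every point $t$ lies in $\overline{A_1}\cap\overline{A_2}$ or in the interior of one region. The Cramér lower bound (applied to small open balls, using continuity of $\Lambda_i^*$ near the means) shows the exponent is at most $\max\{\Lambda_1^*(t),\Lambda_2^*(t)\}$ for a suitable point $t$ on the segment $[m_1,m_2]$, namely one where the segment crosses between the two regions. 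Hence the exponent is at most $E^\star$. Equivalently, one may apply Sanov's theorem to the empirical distribution together with the contraction principle through the linear map $P\mapsto\sum_z P(z)h^k(z)$.

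\textbf{Step 3: local evaluation.} The relevant $t$ lie within $\mathrm{O}(\epsilon)$ of the means, and $|m_1-m_2|=\mathrm{O}(\epsilon)$. A Taylor expansion of $\Lambda_i$ around $0$, followed by Legendre duality, gives
$$\Lambda_i^*(t)=\tfrac12(t-m_i)^{\mathrm T}\Sigma_i^{-1}(t-m_i)+\mathrm{O}(|t-m_i|^3)=\tfrac12\|t-m_i\|^2+\mathrm{o}(\epsilon^2)$$
uniformly for $|t|=\mathrm{O}(\epsilon)$. The $\mathrm{O}(\epsilon)$ error in $\Sigma_i$ contributes only $\mathrm{O}(\epsilon^3)$. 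The min--max of two isotropic quadratics with centres $m_1,m_2$ is attained at the midpoint. Its value is $\tfrac12\|(m_1-m_2)/2\|^2=\tfrac18\|m_1-m_2\|^2$, which by Step 1 equals $\frac{\epsilon^2}{8}\sum_{j}\langle\phi^{Z\mid W}_{w_1}-\phi^{Z\mid W}_{w_2},\psi^Z_j\rangle^2$.

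\textbf{Main obstacle.} The delicate step is making the infimum in Step 3 rigorous. One has to argue that points $t$ far from $m_1,m_2$ (at distance $\gg\epsilon$) cannot beat the midpoint value. This holds because $\Lambda_i^*$ is convex and grows at least quadratically near $m_i$, so restricting the infimum to an $\mathrm{O}(\epsilon)$-ball loses nothing. One also needs the Taylor remainders to be $\mathrm{o}(\epsilon^2)$ uniformly on that ball. I would handle both using strong convexity of $\Lambda_i$ on a fixed neighbourhood of $0$, which follows from $\Sigma_i\succeq\tfrac12\bm I$ for small $\epsilon$. The case where $\sum_j\langle\phi^{Z\mid W}_{w_1}-\phi^{Z\mid W}_{w_2},\psi^Z_j\rangle^2=0$ is covered trivially, since the bound then reads $\mathrm{o}(\epsilon^2)$.
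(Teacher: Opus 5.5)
The paper does not prove this lemma; it defers to \cite[Lemma 4.12]{huang2024universal}. Your route (large deviations for $l^k$, then a local quadratic expansion of $\Lambda_i^*$) is the standard one. Steps 1 and 3 are fine, and so is the achievability half of Step 2. The gap is the converse in Step 2.

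First, the inference as written is invalid. A crossing point $t_c$ on $[m_1,m_2]$ gives you at best the bound $\max\{\Lambda_1^*(t_c),\Lambda_2^*(t_c)\}$. Every such number is $\ge E^\star=\inf_t\max\{\Lambda_1^*(t),\Lambda_2^*(t)\}$, so ``hence at most $E^\star$'' does not follow. Second, and more seriously, Cram\'er's lower bound controls only open sets. Knowing that both $A_1$ and $A_2$ accumulate at $t_c$ tells you nothing about $P_1(l^k\in A_2)$ or $P_2(l^k\in A_1)$ when these regions have empty interior near $t_c$.

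This cannot be repaired for arbitrary partitions, not even $N$-independent ones, because the converse is then false. Take $k=1$, $|\mathcal{Z}|=3$, and a normalized $h$ with $(h(z_1)-h(z_3))/(h(z_2)-h(z_3))$ irrational. Then the value of $l^1$ determines the empirical type uniquely. Hence a fixed countable set $A_1\subset\mathbb{R}$ implements the type-based likelihood-ratio test, whose exponent is the Chernoff information $\frac{\epsilon^2}{8}\|\phi^{Z\mid W}_{w_1}-\phi^{Z\mid W}_{w_2}\|^2+\mathrm{o}(\epsilon^2)$. Now choose $\phi^{Z\mid W}_{w_1}=-\phi^{Z\mid W}_{w_2}=v/2$ with $\|v\|=1$, $v\perp\psi^Z_1$ and $v\perp\sqrt{P_Z}$. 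This exponent is then of order $\epsilon^2$, strictly larger than the claimed $\mathrm{o}(\epsilon^2)$.

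So the lemma holds only for a restricted class of tests. Equivalently, the exponent must be defined through the large-deviation behaviour of $l^k$, which is the convention implicit in the cited source. Your proof must state such a restriction, for example $N$-independent decision regions, each contained in the closure of its interior. Under that restriction the converse is short and needs no segment:
\begin{itemize}
\item The minimizer $t^\star$ of $\max\{\Lambda_1^*,\Lambda_2^*\}$ lies within $\mathrm{O}(\epsilon)$ of $0$, where both $\Lambda_i^*$ are finite and continuous.
\item $t^\star$ belongs to $\overline{A_2^\circ}$ or to $\overline{A_1^\circ}$.
\item Cram\'er's lower bound on that interior, together with continuity at $t^\star$, gives an exponent at most $\Lambda_1^*(t^\star)$ or at most $\Lambda_2^*(t^\star)$, and both are at most $E^\star$.
\end{itemize}
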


The following lemma reveals a useful property of $\delta$-spherically symmetric distributions.

\begin{lemma}
\label{lem: delta_ss_bound}
    Given $\delta \geq 0$, let $\bm{A} \in \mathbb{R}^{n \times m}$ be a $\delta$-spherically symmetric random matrix. Then, for any deterministic matrices $\bm{G} \in \mathbb{R}^{n \times k_1}$ and $\bm{H} \in \mathbb{R}^{m \times k_2}$, there exists a constant $C$ independent of $\delta$ such that
    \begin{equation*}
        \Big| \mathbb{E}\big[\|\bm{G}^{\mathrm{T}} \bm{A} \bm{H} \|_{\mathrm{F}}^2 \big] - \frac{1}{mn} \|\bm{G}\|_{\mathrm{F}}^2  \|\bm{H}\|_{\mathrm{F}}^2 \mathbb{E} \big[\|\bm{A}\|_{\mathrm{F}}^2 \big] \Big| \leq C \delta.
    \end{equation*}
\end{lemma}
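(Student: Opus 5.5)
The plan is to work with the second-moment operator of $\bm{A}$. Write $M(u,v)=\mathbb{E}[(u^{\mathrm{T}}\bm{A}v)^2]$ for unit vectors $u\in\mathbb{R}^n$, $v\in\mathbb{R}^m$. The first step is to show that $M$ is nearly constant on the product of spheres. Given any two pairs $(u,v)$ and $(u',v')$ of unit vectors, pick orthogonal $\bm{Q}_1,\bm{Q}_2$ with $\bm{Q}_1u=u'$ and $\bm{Q}_2v=v'$ (for example, Householder reflections). Then $(u^{\mathrm{T}}\bm{Q}_1^{\mathrm{T}}\bm{A}\bm{Q}_2v)^2=(u'^{\mathrm{T}}\bm{A}v')^2$. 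Definition~\ref{def: delta spherical symmetric random matrix} applied at the pair $(u,v)$ then gives
\[
|M(u',v')-M(u,v)|\leq\delta .
\]
Now let $\bar M$ be the average of $M$ over the standard basis pairs $(e_i,e_j)$. Since $\sum_{i,j}M(e_i,e_j)=\mathbb{E}[\|\bm{A}\|_{\mathrm{F}}^2]$, we have $\bar M=\frac{1}{mn}\mathbb{E}[\|\bm{A}\|_{\mathrm{F}}^2]$, and consequently $|M(u,v)-\bar M|\leq\delta$ for every unit pair $(u,v)$.

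Next I would expand the Frobenius norm using columns. Let $g_a$ be the $a$-th column of $\bm{G}$ and $h_b$ the $b$-th column of $\bm{H}$. Then
\[
\mathbb{E}\big[\|\bm{G}^{\mathrm{T}}\bm{A}\bm{H}\|_{\mathrm{F}}^2\big]=\sum_{a=1}^{k_1}\sum_{b=1}^{k_2}\mathbb{E}\big[(g_a^{\mathrm{T}}\bm{A}h_b)^2\big].
\]
By homogeneity, each summand with $g_a,h_b\neq 0$ equals $\|g_a\|^2\|h_b\|^2M(g_a/\|g_a\|,h_b/\|h_b\|)$; summands with a zero column vanish on both sides. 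Replacing each $M$-value by $\bar M$ costs at most $\|g_a\|^2\|h_b\|^2\delta$. Summing over $a,b$ and using $\sum_a\|g_a\|^2=\|\bm{G}\|_{\mathrm{F}}^2$ and $\sum_b\|h_b\|^2=\|\bm{H}\|_{\mathrm{F}}^2$ gives
\[
\Big|\mathbb{E}\big[\|\bm{G}^{\mathrm{T}}\bm{A}\bm{H}\|_{\mathrm{F}}^2\big]-\frac{1}{mn}\|\bm{G}\|_{\mathrm{F}}^2\|\bm{H}\|_{\mathrm{F}}^2\,\mathbb{E}\big[\|\bm{A}\|_{\mathrm{F}}^2\big]\Big|\leq\|\bm{G}\|_{\mathrm{F}}^2\|\bm{H}\|_{\mathrm{F}}^2\,\delta .
\]
This proves the lemma with $C=\|\bm{G}\|_{\mathrm{F}}^2\|\bm{H}\|_{\mathrm{F}}^2$. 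This constant depends on $\bm{G}$ and $\bm{H}$ but not on $\delta$, as the statement allows. In the intended application the feature vectors are orthonormal, so $C=k_1k_2$.

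There is no real obstacle here. The two points that need care are the transitivity step, where orthogonal matrices must map any unit vector to any other (true in every dimension), and the requirement that $M$ be finite. Finiteness holds whenever $\mathbb{E}[\|\bm{A}\|_{\mathrm{F}}^2]<\infty$, which is automatic for information matrices over finite alphabets; otherwise it is implicit in the definition of $\mathrm{D}$.
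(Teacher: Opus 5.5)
Your proof is correct and follows essentially the paper's argument: column expansion of $\|\bm{G}^{\mathrm{T}}\bm{A}\bm{H}\|_{\mathrm{F}}^2$, homogeneity, and orthogonal (Householder) maps carrying any unit pair to any other. The only difference is that you compare every term directly to the basis-pair average $\frac{1}{mn}\mathbb{E}[\|\bm{A}\|_{\mathrm{F}}^2]$, the same averaging trick the paper uses later in the proof of Lemma~\ref{lem: A deltass BA gammass}, instead of comparing to $\mathbb{E}[\bm{A}_{1,1}^2]$ and then adding a separate permutation step; this saves one triangle inequality and gives $C=\|\bm{G}\|_{\mathrm{F}}^2\|\bm{H}\|_{\mathrm{F}}^2$ rather than the paper's $2\|\bm{G}\|_{\mathrm{F}}^2\|\bm{H}\|_{\mathrm{F}}^2$.
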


\begin{IEEEproof}
    Let $g_{i}$, $h_{j}$ be the $i$th, $j$th column of $\bm{G}$, $\bm{H}$, respectively. Then, $\|\bm{G}^{\mathrm{T}}\bm{A} \bm{H}\|_{F}^2 = \sum_{i=1}^{k_1} \sum_{j=1}^{k_2} (g_{i}^{\mathrm{T}} \bm{A} h_{j})^2$. Setting Householder matrices $\bm{Q}_{g_{i}} \in \mathbb{R}^{n \times n}$ and $\bm{Q}_{h_{j}} \in \mathbb{R}^{m \times m}$ such that $\bm{Q}_{g_{i}} g_{i} = \|g_{i} \| e_1, \ \bm{Q}_{h_{j}} h_{j} = \|h_{j} \| e_1$, we have $\mathbb{E}\big[(g_{i}^{\mathrm{T}} \bm{Q} h_{j})^2 \big] = \|g_{i}\|^2 \|h_{j}\|^2 \cdot \mathbb{E}\big[(e_{1}^{\mathrm{T}} \bm{Q}_{g_{i}}^{\mathrm{T}} \bm{A} \bm{Q}_{h_{j}} e_{1})^2 \big]$. By Definition \ref{def: delta spherical symmetric random matrix}, $\Big| \mathbb{E} \big[(e_{1}^{\mathrm{T}} \bm{Q}_{g_{i}}^{\mathrm{T}} \bm{A} \bm{Q}_{h_{j}} e_{1})^2 \big] - \mathbb{E} \big[\bm{A}_{1,1}^2 \big] \Big| \leq \delta$. Therefore,
    \begin{equation}
    \label{eq: proof_delta_ss_bound_eq4}
    \begin{split}
    & \Big| \mathbb{E}\big[\|\bm{G}^{\mathrm{T}} \bm{A} \bm{H} \|_{\mathrm{F}}^2 \big] - \|\bm{G}\|_{\mathrm{F}}^2  \|\bm{H}\|_{\mathrm{F}}^2 \mathbb{E} \big[\bm{A}_{11}^2 \big] \Big| \\ \leq &  \sum_{i=1}^{k_1} \sum_{j=1}^{k_2} \|g_{i}\|^2 \|h_{j}\|^2 \Big| \mathbb{E} \big[(e_{1}^{\mathrm{T}} \bm{Q}_{g_{i}}^{\mathrm{T}} \bm{A} \bm{Q}_{h_{j}} e_{1})^2 \big] - \mathbb{E} \big[\bm{A}_{11}^2 \big] \Big| \\ \leq & \|\bm{G}\|_{\mathrm{F}}^2  \|\bm{H}\|_{\mathrm{F}}^2 \delta.
    \end{split}
    \end{equation}
    Moreover, setting $\bm{Q}_{1}=P_{1,s}$, $\bm{Q}_{2}=P_{1,t}$ to be the permutation matrices, we have $\Big| \mathbb{E} [\bm{A}_{s,t}^2]- \mathbb{E} [\bm{A}_{1,1}^2] \Big| \leq \delta$. Hence,
    \begin{equation}
    \label{eq: proof_delta_ss_bound_eq5}
    \begin{split}
    & \Big| \frac{1}{mn} \|\bm{G}\|_{F}^2 \|\bm{H}\|_{F}^2 \mathbb{E} \big[\|\bm{A}\|_{F}^2 \big]- \|\bm{G}\|_{F}^2 \|\bm{H}\|_{F}^2 \mathbb{E} [\bm{A}_{1,1}^2] \Big| \\ \leq & \|\bm{G}\|_{F}^2 \|\bm{H}\|_{F}^2 \Big| \frac{1}{mn} \sum_{s=1}^{n} \sum_{t=1}^{m} \Big( \mathbb{E} [\bm{A}_{s,t}^2]- \mathbb{E} [\bm{A}_{1,1}^2] \Big) \Big| \\ \leq & \|\bm{G}\|_{F}^2 \|\bm{H}\|_{F}^2 \delta.
    \end{split}
    \end{equation}
    Combining \eqref{eq: proof_delta_ss_bound_eq4}, \eqref{eq: proof_delta_ss_bound_eq5} and the triangle inequality, we have
    \[
    \Big| \mathbb{E}\big[\|\bm{G}^{\mathrm{T}} \bm{A} \bm{H} \|_{\mathrm{F}}^2 \big]-\frac{1}{mn} \|\bm{G}\|_{\mathrm{F}}^2  \|\bm{H}\|_{\mathrm{F}}^2 \mathbb{E} \big[\|\bm{A}\|_{\mathrm{F}}^2 \big] \Big| \leq C \delta,
    \]
    where $C = 2 \|\bm{G}\|_{F}^2 \|\bm{H}\|_{F}^2$ is independent of $\delta$.
\end{IEEEproof}

The following lemma demonstrates that the weakly spherically symmetric properties remain preserved under left multiplication by a deterministic matrix.

\begin{lemma}
\label{lem: A deltass BA gammass}
    Given $\delta \geq 0$, let $\bm{A} \in \mathbb{R}^{n \times m}$ be a $\delta$-spherically symmetric random matrix, and $\bm{B} \in \mathbb{R}^{n \times n}$ be a deterministic matrix with singular values $\sigma_1 \geq \cdots \geq \sigma_{n} \geq 0$. Then, $\bm{BA}$ is $\gamma$-spherically symmetric, where $\gamma = (\alpha + \delta)(\sigma_1^2 - \sigma_{n}^2) + \sigma_1^2 \delta$ with $\alpha = \frac{\mathbb{E}[\|\bm{A}\|_{\mathrm{F}}^2]}{mn}$.
\end{lemma}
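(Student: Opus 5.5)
The plan is to verify Definition~\ref{def: delta spherical symmetric random matrix} for $\bm{BA}$ directly. Fix orthogonal $\bm{Q}_1\in\mathbb{R}^{n\times n}$, $\bm{Q}_2\in\mathbb{R}^{m\times m}$ and unit vectors $u\in\mathbb{R}^n$, $v\in\mathbb{R}^m$. I need to bound
\[
\Delta=\Big|\mathbb{E}\big[(u^{\mathrm{T}}\bm{Q}_1^{\mathrm{T}}\bm{B}\bm{A}\bm{Q}_2 v)^2\big]-\mathbb{E}\big[(u^{\mathrm{T}}\bm{B}\bm{A}v)^2\big]\Big|
\]
by $\gamma$, uniformly in $u,v$.

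I will push $\bm{B}$ onto the vectors. Set $a=\bm{B}^{\mathrm{T}}\bm{Q}_1u$, $b=\bm{B}^{\mathrm{T}}u$ and $w=\bm{Q}_2v$, so that $w$ is a unit vector. Since $\|a\|^2=(\bm{Q}_1u)^{\mathrm{T}}\bm{B}\bm{B}^{\mathrm{T}}(\bm{Q}_1u)$ and $\|b\|^2=u^{\mathrm{T}}\bm{B}\bm{B}^{\mathrm{T}}u$, and the eigenvalues of $\bm{B}\bm{B}^{\mathrm{T}}$ are $\sigma_1^2,\dots,\sigma_n^2$, both norms lie in $[\sigma_n^2,\sigma_1^2]$. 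Define $m(x,y)=\mathbb{E}[(x^{\mathrm{T}}\bm{A}y)^2]$ for unit $x,y$, and write $\hat a=a/\|a\|$ and $\hat b=b/\|b\|$. If either vector is zero, the corresponding term vanishes and the argument below simplifies. Then
\[
\Delta=\big|\|a\|^2 m(\hat a,w)-\|b\|^2 m(\hat b,v)\big|\le \big|\|a\|^2-\|b\|^2\big|\, m(\hat a,w)+\|b\|^2\,\big|m(\hat a,w)-m(\hat b,v)\big| .
\]

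For the second term, I choose orthogonal $\bm{R}_1$ with $\bm{R}_1\hat b=\hat a$ and $\bm{R}_2$ with $\bm{R}_2 v=w$; Householder reflections, as in Lemma~\ref{lem: delta_ss_bound}, will do. Then $m(\hat a,w)=\mathbb{E}[(\hat b^{\mathrm{T}}\bm{R}_1^{\mathrm{T}}\bm{A}\bm{R}_2v)^2]$, and $\delta$-spherical symmetry of $\bm{A}$ gives $|m(\hat a,w)-m(\hat b,v)|\le\delta$. This term is therefore at most $\sigma_1^2\delta$.

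For the first term, $|\|a\|^2-\|b\|^2|\le\sigma_1^2-\sigma_n^2$, so it remains to show $m(\hat a,w)\le\alpha+\delta$. This is the only mildly delicate step, because the comparison must be with $\alpha$ and not with $\mathbb{E}[\bm{A}_{1,1}^2]$. A naive chain $m\to\mathbb{E}[\bm{A}_{1,1}^2]\to\alpha$ would cost $2\delta$.

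I would instead argue as follows. For every pair $(s,t)$, pick orthogonal matrices sending $\hat a\mapsto e_s$ and $w\mapsto e_t$. By $\delta$-spherical symmetry, $m(\hat a,w)\le\mathbb{E}[\bm{A}_{s,t}^2]+\delta$ for every $s,t$. Averaging over all $mn$ pairs gives $m(\hat a,w)\le\frac{1}{mn}\mathbb{E}[\|\bm{A}\|_{\mathrm{F}}^2]+\delta=\alpha+\delta$.

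Combining the two terms yields $\Delta\le(\alpha+\delta)(\sigma_1^2-\sigma_n^2)+\sigma_1^2\delta=\gamma$. Since $\bm{Q}_1,\bm{Q}_2,u,v$ were arbitrary, taking the supremum in Definition~\ref{def: delta spherical symmetric random matrix} shows that $\bm{BA}$ is $\gamma$-spherically symmetric.
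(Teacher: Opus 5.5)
Your proof is correct and follows essentially the same route as the paper. You push $\bm{B}$ onto the vectors, add and subtract, bound $\bigl|\|a\|^2-\|b\|^2\bigr|$ by $\sigma_1^2-\sigma_n^2$, bound the $m$-difference by $\delta$ using orthogonal maps between unit vectors, and bound $m$ by $\alpha+\delta$ by writing $\alpha$ as an average of values of $m$. The differences are minor: you attach the coefficients to the opposite factors in the add-and-subtract step, and you average over the $mn$ basis pairs $(e_s,e_t)$ rather than over independent uniform unit vectors, which gives the same $\alpha+\delta$ bound a little more elementarily.
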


\begin{IEEEproof}
    For unit vectors \(x \in \mathbb{R}^{n}\) and \(y \in \mathbb{R}^{m}\), define
    \[
    f(x,y) = \mathbb{E}\big[(x^{\mathrm{T}} \bm{A} y)^2\big],
    \qquad
    \alpha = \frac{\mathbb{E}[\|\bm{A}\|_{\mathrm{F}}^2]}{mn}.
    \]
    Since \(\bm{A}\) is \(\delta\)-spherically symmetric, for any unit vectors \(x,y,x',y'\), we have $\big|f(x,y)-f(x',y')\big| \leq \delta$. Hence all values of \(f\) lie in an interval of length at most \(\delta\). Moreover,
    \[
    \alpha = \mathbb{E}_{U,V}\big[f(U,V)\big],
    \]
    where \(U\) and \(V\) are independent uniform unit vectors, because
    \(\mathbb{E}[UU^{\mathrm{T}}] = I_n/n\) and
    \(\mathbb{E}[VV^{\mathrm{T}}] = I_m/m\). Therefore, \(\alpha\) lies in the same interval, and $\big|f(x,y)-\alpha\big| \leq \delta$.

    For \(\bm{B}^{\mathrm{T}}u \neq 0\), let
    \[
    \tilde{u} = \frac{\bm{B}^{\mathrm{T}}u}{\|\bm{B}^{\mathrm{T}}u\|}.
    \]
    Then
    \begin{equation}
    \label{eq: BA deltass proof_2}
        \mathbb{E}\big[(u^{\mathrm{T}} \bm{B}\bm{A} v)^2\big]
        =
        u^{\mathrm{T}} \bm{B} \bm{B}^{\mathrm{T}} u \cdot f(\tilde{u},v).
    \end{equation}
    If \(\bm{B}^{\mathrm{T}}u = 0\), the original quadratic form vanishes; this case can be handled separately or by continuity and does not affect the final bound.

    For any orthogonal matrices \(\bm{Q}_1 \in \mathbb{R}^{n \times n}\) and
    \(\bm{Q}_2 \in \mathbb{R}^{m \times m}\), let $\hat{u} = \bm{Q}_1 u$, $\hat{v} = \bm{Q}_2 v$. If \(\bm{B}^{\mathrm{T}} \hat{u} \neq 0\), let
    \[
    \tilde{\hat{u}} = \frac{\bm{B}^{\mathrm{T}}\hat{u}}{\|\bm{B}^{\mathrm{T}}\hat{u}\|}.
    \]
    Then,
    \[
    \mathbb{E}\big[(\hat{u}^{\mathrm{T}} \bm{B}\bm{A} \hat{v})^2\big]
    =
    \hat{u}^{\mathrm{T}} \bm{B}\bm{B}^{\mathrm{T}} \hat{u} \cdot f(\tilde{\hat{u}},\hat{v})
    =
    u^{\mathrm{T}} \bm{Q}_1^{\mathrm{T}} \bm{M} \bm{Q}_1 u \cdot f(\tilde{\hat{u}},\hat{v}),
    \]
    where \(\bm{M} = \bm{B}\bm{B}^{\mathrm{T}}\).

    Consider $\Delta = \big|
    \mathbb{E}[(\hat{u}^{\mathrm{T}} \bm{B}\bm{A} \hat{v})^2] - \mathbb{E}[(u^{\mathrm{T}} \bm{B}\bm{A} v)^2]
    \big|$. Substituting \eqref{eq: BA deltass proof_2}, we obtain
    \[
    \Delta
    =
    \left|
    (u^{\mathrm{T}}\bm{Q}_1^{\mathrm{T}} \bm{M} \bm{Q}_1 u) \cdot f(\tilde{\hat{u}},\hat{v})
    -
    (u^{\mathrm{T}} \bm{M} u) \cdot f(\tilde{u},v)
    \right|.
    \]
    By the triangle inequality,
    \begin{align*}
        \Delta
        \leq{}&
        (u^{\mathrm{T}}\bm{Q}_1^{\mathrm{T}}\bm{M} \bm{Q}_1 u)
        \cdot
        \big| f(\tilde{\hat{u}},\hat{v}) - f(\tilde{u},v) \big| \\
        &+
        \big| f(\tilde{u},v) \big|
        \cdot
        \big| u^{\mathrm{T}} \bm{Q}_1^{\mathrm{T}} \bm{M} \bm{Q}_1 u - u^{\mathrm{T}} \bm{M} u \big|.
    \end{align*}

    As demonstrated, $\big| f(\tilde{\hat{u}},\hat{v}) - f(\tilde{u},v) \big| \leq \delta$, $\big| f(\tilde{u},v) \big| \leq \alpha + \delta$. Since \(\bm{M} = \bm{B}\bm{B}^{\mathrm{T}}\), we have $\lambda_{\max}(\bm{M}) = \sigma_1^2$, $\lambda_{\min}(\bm{M}) = \sigma_n^2$. Therefore,
    \[
    u^{\mathrm{T}} \bm{Q}_1^{\mathrm{T}} \bm{M} \bm{Q}_1 u \leq \sigma_1^2,
    \]
    and
    \[
    \big| u^{\mathrm{T}} \bm{Q}_1^{\mathrm{T}} \bm{M} \bm{Q}_1 u - u^{\mathrm{T}} \bm{M} u \big|
    \leq
    \sigma_1^2 - \sigma_n^2.
    \]
    Combining these results gives
    \[
    \Delta
    \leq
    \sigma_1^2 \delta
    +
    (\alpha + \delta)(\sigma_1^2 - \sigma_n^2)
    =
    (\alpha + \delta)(\sigma_1^2 - \sigma_n^2) + \sigma_1^2 \delta
    =
    \gamma.
    \]
    Taking the supremum over all unit vectors \(u,v\) and all orthogonal matrices
    \(\bm{Q}_1,\bm{Q}_2\), we obtain
    \[
    \mathrm{D}(\bm{Q}_1^{\mathrm{T}} \bm{B}\bm{A} \bm{Q}_2, \bm{B}\bm{A}) \leq \gamma.
    \]
    Thus, \(\bm{B}\bm{A}\) is \(\gamma\)-spherically symmetric.
\end{IEEEproof}

\begin{remark}
    Lemma \ref{lem: A deltass BA gammass} provides an upper bound on the symmetry deviation of $\bm{BA}$. To verify its validity, first suppose $\bm{B} = \bm{I}_{n}$, then $\bm{BA} \stackrel{\text{d}}{=} \bm{A}$, and by Lemma \ref{lem: A deltass BA gammass}, we obtain $\gamma = \delta$. Next, assume $\bm{A}$ is exactly spherically symmetric, i.e., $\delta = 0$. In this case, Lemma \ref{lem: A deltass BA gammass} gives $\gamma = (\sigma_1^2 - \sigma_{n}^2) \alpha$, showing that the symmetry deviation of $\bm{BA}$ arises solely from the range among the singular values of $\bm{B}$.
\end{remark}

Applying Lemma \ref{lem: A deltass BA gammass} yields the following lemma.

\begin{lemma}
\label{lem: remain gammass}
    Given $\delta \geq 0$, assume both $\bm{\Phi}^{X \mid U}$ and $\bm{\Phi}^{Y \mid V}$ are $\delta$-spherically symmetric. Suppose $\hat{X}$ and $\hat{Y}$ are generated via the channel \eqref{eq: channel transfer matrices} characterized by $\eta_1$, $\eta_2$. Then, $\bm{\Phi}^{\hat{X} \mid U}$ is $\gamma_1$-spherically symmetric, and $\bm{\Phi}^{\hat{Y} \mid V}$ is $\gamma_2$-spherically symmetric, where $\gamma_1 = \mathrm{O}(\delta + \eta_1 + \delta \eta_1)$, $\gamma_2 = \mathrm{O}(\delta + \eta_2 + \delta \eta_2)$.
\end{lemma}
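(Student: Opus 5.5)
The plan is to write $\bm{\Phi}^{\hat{X} \mid U} = \bm{B}_1 \bm{\Phi}^{X \mid U}$ for an explicit deterministic matrix $\bm{B}_1$ close to the identity, and then apply Lemma~\ref{lem: A deltass BA gammass}. The case of $\hat{Y}$ is symmetric, with $\eta_2$ in place of $\eta_1$.

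\textbf{Step 1: the linear map.} By the Markov structure $U \leftrightarrow X \leftrightarrow \hat{X}$,
\[
P_{\hat{X}\mid U}(\cdot\mid u) = \bm{P}_{\hat{X}\mid X} P_{X\mid U}(\cdot \mid u), \qquad P_{\hat X} = \bm{P}_{\hat{X}\mid X} P_X .
\]
Subtracting these gives
\[
P_{\hat{X}\mid U}(\cdot\mid u) - P_{\hat X} = \bm{P}_{\hat{X}\mid X}\big(P_{X\mid U}(\cdot\mid u) - P_X\big).
\]
Dividing entrywise by $\epsilon\sqrt{P_{\hat X}}$ yields
\[
\bm{\Phi}^{\hat X\mid U} = \bm{B}_1 \bm{\Phi}^{X\mid U}, \qquad \bm{B}_1 = \mathrm{diag}\big(P_{\hat X}\big)^{-1/2}\, \bm{P}_{\hat X\mid X}\, \mathrm{diag}\big(P_X\big)^{1/2}.
\]
Since $\bm{P}_{\hat X\mid X}=\bm{I}+\eta_1\bm{T}_{\hat X\mid X}$ and $P_{\hat X}=P_X+\eta_1\bm{T}_{\hat X\mid X}P_X$, both $\mathrm{diag}(P_{\hat X})^{-1/2}$ and the product defining $\bm{B}_1$ differ from their $\eta_1=0$ values by $\mathrm{O}(\eta_1)$. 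This uses $P_X\in\mathrm{rel}(\mathcal{P}^{\mathcal X})$ to keep $P_{\hat X}$ bounded away from zero. Hence $\bm{B}_1 = \bm{I} + \eta_1 \bm{E}_1$ with $\|\bm{E}_1\| = \mathrm{O}(1)$.

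\textbf{Step 2: singular values.} By Weyl's inequality, every singular value of $\bm{B}_1$ lies in $[1 - c\eta_1,\, 1+c\eta_1]$. Therefore
\[
\sigma_1^2 - \sigma_n^2 = \mathrm{O}(\eta_1), \qquad \sigma_1^2 = 1 + \mathrm{O}(\eta_1).
\]

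\textbf{Step 3: invoke the lemma.} Lemma~\ref{lem: A deltass BA gammass} gives
\[
\gamma_1 = (\alpha+\delta)\,\mathrm{O}(\eta_1) + \big(1+\mathrm{O}(\eta_1)\big)\delta .
\]
It remains to show that $\alpha = \mathbb{E}\|\bm{\Phi}^{X\mid U}\|_{\mathrm F}^2/(mn)$ is bounded. This holds because $U$ is an $\epsilon$-attribute: each column $\phi_u$ satisfies $\|\phi_u\|^2 \le 1$ by the definition of $\mathcal{N}^\epsilon$, so $\alpha \le 1/n$. Collecting terms gives $\gamma_1 = \mathrm{O}(\delta+\eta_1+\delta\eta_1)$.

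\textbf{Main obstacle.} The delicate step is Step 1 together with the uniformity of the $\mathrm{O}(\cdot)$ constants. The normalization $\sqrt{P_{\hat X}}$ differs from $\sqrt{P_X}$, so $\bm{B}_1$ is not exactly $\bm{P}_{\hat X\mid X}$. One must also check that $\bm{\Phi}^{X\mid U}$ is square, with $n=|\mathcal X|$ rows, so that $\bm{B}_1\in\mathbb{R}^{n\times n}$ as Lemma~\ref{lem: A deltass BA gammass} requires. The constants in the $\mathrm{O}(\eta_1)$ bounds depend on $\min_x P_X(x)$ and on $\|\bm{T}_{\hat X\mid X}\|$; I would treat these as fixed problem data. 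I would first expand $\mathrm{diag}(P_{\hat X})^{-1/2}$ to first order in $\eta_1$ and bound the remainder for small $\eta_1$. For $\eta_1$ not small, the $\mathrm{O}(\cdot)$ claim follows trivially, since all quantities are bounded.
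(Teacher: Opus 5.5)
Your proposal is correct and follows the paper's proof. Both use the factorization $\bm{\Phi}^{\hat{X}\mid U} = \bm{D}_{\hat{X}}^{-1/2}\bm{P}_{\hat{X}\mid X}\bm{D}_{X}^{1/2}\,\bm{\Phi}^{X\mid U}$, show the singular-value spread is $\mathrm{O}(\eta_1)$, and apply Lemma~\ref{lem: A deltass BA gammass}. The differences are minor: you get the spread from Weyl's inequality on $\bm{I}+\eta_1\bm{E}_1$ (so $\sigma_1^2 = 1+\mathrm{O}(\eta_1)$), whereas the paper uses that $\bm{B}\bm{B}^{\mathrm{T}}$ is similar to the column-stochastic $\bm{P}_{\hat{X}\mid X}\bm{P}_{X\mid\hat{X}}$ to get $\sigma_1 = 1$ exactly, and your explicit bound $\alpha \le 1/|\mathcal{X}|$ fills a step the paper leaves implicit.
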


\begin{IEEEproof}
    Define the matrix $\bm{B}_{X,\hat{X}}$ such that
    \begin{equation*}
        (\bm{B}_{X,\hat{X}})_{j,i} = \frac{P_{X,\hat{X}}(x_{i}, \hat{x}_{j})}{\sqrt{P_{X}(x_{i})} \sqrt{P_{\hat{X}}(\hat{x}_{j})}}.
    \end{equation*}
    Since $U \leftrightarrow X \leftrightarrow \hat{X}$ forms a Markov chain, for any $u \in \mathcal{U}$, we have $P_{\hat{X} \mid U}(\hat{x} \mid u) = \sum_{x} P_{\hat{X} \mid X}(\hat{x} \mid x) P_{X \mid U}(x \mid u)$, and $P_{\hat{X}}(\hat{x}) = \sum_{x} P_{\hat{X} \mid X}(\hat{x} \mid x) P_{X}(x)$. Accordingly, $\phi^{\hat{X} \mid U}_{u} = \bm{B}_{X, \hat{X}} \cdot \phi^{X \mid U}_{u}$. Thus, $\Phi^{\hat{X} \mid U} = \bm{B}_{X, \hat{X}} \cdot \Phi^{X \mid U}$.
    
    Let $\bm{D}_{X}$, $\bm{D}_{\hat{X}}$ be the diagonal matrices with entries $P_{X}(x)$, $P_{\hat{X}}(\hat{x})$, respectively. Since $\bm{B}_{X, \hat{X}} = \bm{D}_{\hat{X}}^{-1/2} \bm{P}_{\hat{X} \mid X} \bm{D}_{X}^{1/2}$ and $\bm{B}_{X, \hat{X}}^{\mathrm{T}} = \bm{D}_{X}^{-1/2} \bm{P}_{X \mid \hat{X}} \bm{D}_{\hat{X}}^{1/2}$, the matrix $\bm{B}_{X, \hat{X}} \bm{B}_{X, \hat{X}}^{\mathrm{T}}$ is similar to $\bm{P}_{\hat{X} \mid X} \bm{P}_{X \mid \hat{X}}$, thus has the same eigenvalues.
    
    According to \eqref{eq: channel transfer matrices}, we have $\bm{D}_{\hat{X}} = \bm{D}_{X} + \mathrm{O}(\bm{\eta}_1)$. Then, the reverse channel matrix $\bm{P}_{X \mid \hat{X}} = \bm{D}_{X} \bm{P}_{\hat{X} \mid X}^{\mathrm{T}} \bm{D}_{\hat{X}}^{-1} = \bm{I} + \mathrm{O}(\bm{\eta}_1)$. Thus, $\bm{P}_{\hat{X} \mid X} \bm{P}_{X \mid \hat{X}} = \bm{I} + \mathrm{O}(\bm{\eta}_1)$. Let $\bm{P}' = \bm{P}_{\hat{X} \mid X} \bm{P}_{X \mid \hat{X}}$. Since both $\bm{P}_{\hat{X} \mid X}$ and $\bm{P}_{X \mid \hat{X}}$ are column-stochastic matrices, their product $\bm{P}'$ is still column-stochastic. Since $\rho(\bm{P}') \leq \|\bm{P}'\|_{1} = 1$ and $
    (\bm{P}')^{\mathrm{T}} \bm{1} = \bm{1}$, we have $\lambda_{\text{max}} (\bm{P}') = 1$. In addition, $\bm{P}' = \bm{I} + \mathrm{O}(\bm{\eta}_1)$ gives $\lambda_{\text{min}} (\bm{P}') = 1 - \mathrm{O}(\eta_1)$. Then, $\sigma_{\text{max}}^2 (\bm{B}_{X, \hat{X}}) - \sigma_{\text{min}}^2 (\bm{B}_{X, \hat{X}}) = \mathrm{O}(\eta_1)$. By Lemma \ref{lem: A deltass BA gammass}, $\bm{\Phi}^{\hat{X} \mid U}$ is $\gamma_1$-spherically symmetric where $\gamma_1 = \mathrm{O}(\delta + \eta_1 + \delta \eta_1)$.

    The proof for $\bm{\Phi}^{\hat{Y} \mid V}$ is similar.
\end{IEEEproof}

Markov chain $U \leftrightarrow X \leftrightarrow Y \leftrightarrow V$ implies $\bm{\Phi}^{Y \mid U} = \tilde{\bm{B}}_{X,Y} \bm{\Phi}^{X \mid U}$ and $\bm{\Phi}^{X \mid V} = \tilde{\bm{B}}_{X,Y}^{\mathrm{T}} \bm{\Phi}^{Y \mid V}$. The following lemma reveals analogous results for the generalized model \eqref{eq: generalized Markov chain}.

\begin{lemma}
\label{lem: Phi = B Phi}
    In the generalized model \eqref{eq: generalized Markov chain}, suppose $\hat{X}$, $\hat{Y}$ are generated via the channel \eqref{eq: channel transfer matrices} characterized by $\eta_1$, $\eta_2$. Then,
    \begin{equation*}
        \bm{\Phi}^{\hat{Y} \mid U} = \tilde{\bm{B}}_{\hat{X}, \hat{Y}} \bm{\Phi}^{\hat{X} \mid U} + \mathrm{O}(\bm{\eta}_1), \bm{\Phi}^{\hat{X} \mid V} = \tilde{\bm{B}}_{\hat{X}, \hat{Y}}^{\mathrm{T}} \bm{\Phi}^{\hat{Y} \mid V} + \mathrm{O}(\bm{\eta}_2).
    \end{equation*}
\end{lemma}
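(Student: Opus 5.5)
The first identity comes from writing $\bm{\Phi}^{\hat{Y} \mid U}$ exactly as a product of $\bm{\Phi}^{X \mid U}$ with a matrix built from $P_{\hat Y \mid X}$, and then comparing that matrix with $\tilde{\bm{B}}_{\hat{X},\hat{Y}}\bm{B}_{X,\hat{X}}$, where $\bm{B}_{X,\hat{X}}$ is the matrix from the proof of Lemma~\ref{lem: remain gammass}. The second identity follows by the symmetric argument with the roles of $(X,\hat X,\eta_1)$ and $(Y,\hat Y,\eta_2)$ swapped.

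\textbf{Step 1 (exact expression).} By the chain $U \leftrightarrow X \leftrightarrow Y \leftrightarrow \hat{Y}$, we have $P_{\hat{Y}\mid U}(\hat y\mid u)=\sum_x P_{\hat{Y}\mid X}(\hat y\mid x)P_{X\mid U}(x\mid u)$. Hence
\[
\bm{\Phi}^{\hat{Y}\mid U}=\bm{B}_{X,\hat{Y}}\bm{\Phi}^{X\mid U},
\qquad
(\bm{B}_{X,\hat{Y}})_{j,i}=\frac{P_{X,\hat{Y}}(x_i,\hat y_j)}{\sqrt{P_X(x_i)P_{\hat{Y}}(\hat y_j)}}.
\]
This is the same computation as in the proof of Lemma~\ref{lem: remain gammass}. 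The columns of $\bm{\Phi}^{X\mid U}$ are orthogonal to $\sqrt{P_X}$, since $\sum_x (P_{X\mid U}(x\mid u)-P_X(x))=0$. So we may subtract the rank-one term $\sqrt{P_{\hat Y}}\sqrt{P_X}^{\mathrm T}$ without changing the product, and the lemma holds with either the centred or the uncentred form.

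\textbf{Step 2 (invert the $\hat X$ channel approximately).} From Lemma~\ref{lem: remain gammass}, $\bm{\Phi}^{\hat X\mid U}=\bm{B}_{X,\hat X}\bm{\Phi}^{X\mid U}$. Because $\bm{P}_{\hat X\mid X}=\bm{I}+\eta_1\bm T$ and $\bm D_{\hat X}=\bm D_X+\mathrm O(\eta_1)$, we get $\bm{B}_{X,\hat X}=\bm I+\mathrm O(\eta_1)$, so $\bm{\Phi}^{X\mid U}=\bm{\Phi}^{\hat X\mid U}+\mathrm O(\eta_1)$. Next, compare $\bm B_{X,\hat Y}$ with $\tilde{\bm B}_{\hat X,\hat Y}$ (centred). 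Their entries are
\[
P_{X,\hat Y}(x,\hat y)=\sum_y P_{\hat Y\mid Y}(\hat y\mid y)\,P_{X,Y}(x,y),
\]
\[
P_{\hat X,\hat Y}(\hat x,\hat y)=\sum_x P_{\hat X\mid X}(\hat x\mid x)\,P_{X,\hat Y}(x,\hat y).
\]
Since $P_{\hat X\mid X}=\bm I+\mathrm O(\eta_1)$ and $P_{\hat X}=P_X+\mathrm O(\eta_1)$, each entry differs by $\mathrm O(\eta_1)$, because all these expressions are smooth in $\eta_1$ away from zero probabilities. Therefore $\tilde{\bm B}_{\hat X,\hat Y}=\bm B_{X,\hat Y}+\mathrm O(\eta_1)$ after centring. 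Combining,
\[
\bm{\Phi}^{\hat Y\mid U}=(\tilde{\bm B}_{\hat X,\hat Y}+\mathrm O(\eta_1))(\bm{\Phi}^{\hat X\mid U}+\mathrm O(\eta_1))=\tilde{\bm B}_{\hat X,\hat Y}\bm{\Phi}^{\hat X\mid U}+\mathrm O(\bm\eta_1),
\]
where we use that $\|\tilde{\bm B}\|\le 1$ and $\|\bm\Phi^{X\mid U}\|_{\mathrm F}$ is bounded (each column has norm at most $1$ by the $\epsilon$-neighborhood condition). The noise $\eta_2$ is absorbed exactly into $\tilde{\bm B}_{\hat X,\hat Y}$, which is why only $\eta_1$ appears.

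\textbf{Main obstacle.} The delicate point is making the $\mathrm O(\eta_1)$ uniform. This requires $P_X$ to be bounded away from zero, which holds because $P_X\in\mathrm{rel}(\mathcal P^{\mathcal X})$ on a finite alphabet, so that $\bm D_{\hat X}^{-1/2}$ is a smooth function of $\eta_1$. It also requires checking that the rank-one centring term is handled consistently. I would first verify the centring identity, namely that $\sqrt{P_X}^{\mathrm T}\phi_u=0$ kills the mean term, and then bound the differences entrywise by a first-order Taylor expansion in $\eta_1$.
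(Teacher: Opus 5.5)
Your proof is correct. It reaches the lemma by a somewhat different, and more careful, route than the paper.

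The paper argues at the level of conditional distributions. It uses $P_{\hat X\mid X}\approx\bm I$ to write $P_{\hat Y\mid U}(\hat y\mid u)=\sum_{\hat x}P_{\hat Y\mid\hat X}(\hat y\mid\hat x)P_{\hat X\mid U}(\hat x\mid u)+\mathrm O(\eta_1)$. In effect it compares with what would hold if $U\leftrightarrow\hat X\leftrightarrow\hat Y$ were a Markov chain, and then reads off $\phi^{\hat Y\mid U}_u=\tilde{\bm B}_{\hat X,\hat Y}\phi^{\hat X\mid U}_u+\mathrm O(\eta_1)$.

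You instead start from the exact factorization $\bm\Phi^{\hat Y\mid U}=\bm B_{X,\hat Y}\bm\Phi^{X\mid U}$. You remove the rank-one term via $\sqrt{P_X}^{\mathrm T}\phi^{X\mid U}_u=0$, then perturb the two factors separately, controlling products with $\|\tilde{\bm B}\|\le 1$ and $\|\phi_u\|\le 1$.

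The paper's version is shorter, but its step from distributions to information vectors relies on something it does not state. Since $\bm\Phi$ carries a factor $1/\epsilon$, the distribution-level error must be $\mathrm O(\eta_1\epsilon)$ rather than merely $\mathrm O(\eta_1)$. This is true: the discrepancy is linear in $P_{X\mid U}(\cdot\mid u)-P_X$ with an $\mathrm O(\eta_1)$ coefficient matrix. Your factorization makes this uniformity in $\epsilon$ automatic. That uniformity is exactly what the proof of Theorem~\ref{the: main result} needs to turn the lemma into an $\mathrm O(\eta_1\epsilon^2)$ correction.
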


\begin{IEEEproof}
    The channel \eqref{eq: channel transfer matrices} gives $P_{\hat{X} \mid X}(\hat{x} \mid x) = 1 + \mathrm{O}(\eta_1)$. Since $U \leftrightarrow X \leftrightarrow \hat{Y}$ forms a Markov chain, for any $u \in \mathcal{U}$,
    \begin{align*}
        P_{\hat{Y} \mid U}(\hat{y} \mid u) & = \sum_{x}  P_{\hat{Y} \mid X}(\hat{y} \mid x) P_{X \mid U}(x \mid u) \\ & = \sum_{\hat{x}}  P_{\hat{Y} \mid \hat{X}}(\hat{y} \mid \hat{x}) P_{\hat{X} \mid U}(\hat{x} \mid u) + \mathrm{O}(\eta_1).
    \end{align*}
    Then, the information vector $\phi^{\hat{Y} \mid U}_{u} = \tilde{\bm{B}}_{\hat{X}, \hat{Y}} \phi^{\hat{X} \mid U}_{u} + \mathrm{O}(\eta_1)$. Hence, $\bm{\Phi}^{\hat{Y} \mid U} = \tilde{\bm{B}}_{\hat{X}, \hat{Y}} \bm{\Phi}^{\hat{X} \mid U} + \mathrm{O}(\bm{\eta}_1)$.
    
    The second equality can be verified in the same way.
\end{IEEEproof}

\begin{IEEEproof}[Proof of Theorem \ref{the: main result}]
    The proof largely follows the lines of the proof of \cite[Proposition 5.10]{huang2024universal}, with the incorporation of Lemma \ref{lem: Huang 4.12} and modifications based on Lemmas \ref{lem: delta_ss_bound}, \ref{lem: remain gammass}, and \ref{lem: Phi = B Phi}.
    
    For $p_{e}^{U|\hat{T}}(g^{k},\mathcal{C}^{\epsilon}_{\mathcal{X}}(P_{X}))$, setting $\mathcal{C}^{\epsilon}_{\mathcal{X}}(P_{X}) = \mathcal{C}$, we have
    \begin{align}
        & \lim\limits_{N \to \infty} - \frac{\log p_{e}^{U|\hat{T}}(g^{k},  \mathcal{C})}{N} = \lim\limits_{N \to \infty} - \frac{\log p_{e}^{U|\hat{T}}(g^{k}, u_{\mathcal{C}}^{*}, u_{\mathcal{C}}^{*'})}{N} \label{eq: pairwise e e 1} \\ = & \frac{\epsilon^2}{8} \sum_{j=1}^{k} \langle \phi^{\hat{Y} \mid U}_{u_{\mathcal{C}}^{*}} - \phi^{\hat{Y} \mid U}_{u_{\mathcal{C}}^{*'}}, \psi_{j}^{\hat{Y}} \rangle^2 + \mathrm{o}(\epsilon^2), \label{eq: pairwise e e 2}
    \end{align}
    where 1) to obtain \eqref{eq: pairwise e e 1}, we have used $p_{e}^{U\mid\hat{T}}(g^{k}, u, u')$ to represent the pairwise error probability for distinguishing $u$ and $u'$ based on $\hat{T}^{k}$, and defined $(u_{\mathcal{C}}^{*}, u_{\mathcal{C}}^{*'}) = \arg\max_{(u_{\mathcal{C}}, u_{\mathcal{C}}')} p_{e}^{U \mid \hat{T}}(g^{k}, u_{\mathcal{C}}, u_{\mathcal{C}}')$ as the least distinguishable hypothesis pair in $\mathcal{C}$; 2) to obtain \eqref{eq: pairwise e e 2}, we have applied Lemma \ref{lem: Huang 4.12}, with $\psi_{j}^{\hat{Y}}$ denoting the corresponding feature vector of $g_{j}, j=1 \cdots, k$. Thus, the error exponent in decisions for $U$ based on $\hat{T}^{k}$ is derived by
    \begin{align}
        & \ \bar{E}^{U \mid \hat{T}}(g^{k})  = \mathbb{E}_{\mu_{U}} \Big[\lim\limits_{N \to \infty} - \frac{\log p_{e}^{U \mid \hat{T}}(g^{k}, \mathcal{C}^{\epsilon}_{\mathcal{X}}(P_{X}))}{N} \Big] \label{eq: proof_prop_noisy_0} \\ = & \ \frac{\epsilon^2}{8} \mathbb{E}_{\mu_{U}} \Big[ \big\|(\bm{\Psi}^{\hat{Y}})^{\mathrm{T}} \Phi^{\hat{Y} \mid U} (\bm{e}_{u^{*}} - \bm{e}_{u^{*'}}) \big\|_{\mathrm{F}}^2 \Big] + \mathrm{o}(\epsilon^2) \label{eq: proof_prop_noisy_2} \\ = & \ \frac{\epsilon^2}{8} \mathbb{E}_{\mu_{U}} \Big[ \big\|(\bm{\Psi}^{\hat{Y}})^{\mathrm{T}} \tilde{\bm{B}}_{\hat{X}, \hat{Y}} \bm{\Phi}^{\hat{X} \mid U} (\bm{e}_{u^{*}} - \bm{e}_{u^{*'}}) \big\|_{\mathrm{F}}^2 \Big] + \mathrm{O}(\eta_1 \epsilon^2) \label{eq: proof_prop_noisy_3} \\ = & \ \frac{\epsilon^2 \mathbb{E}_{\mu_{U}} \Big[\big\| \Phi^{\hat{X} \mid U} \big\|_{\mathrm{F}}^2 \Big]}{4 |\mathcal{X}| |\mathcal{U}|} \big\|(\bm{\Psi}^{\hat{Y}})^{\mathrm{T}} \tilde{\bm{B}}_{\hat{X}, \hat{Y}} \big\|_{\mathrm{F}}^2 + r(\epsilon, \delta, \eta_1) \label{eq: proof_prop_noisy_4} \\ \leq & \ C_{U} \epsilon^2 \sum_{i=1}^{k} \sigma_{i}^2 + R(\epsilon, \delta, \eta_1, \eta_2), \label{eq: proof_prop_noisy_5}
    \end{align}
    where 1) to obtain \eqref{eq: proof_prop_noisy_2} we have invoked \eqref{eq: pairwise e e 2}, with $(u^{*}, u^{*'})$ denoting the least distinguishable hypothesis pair in each configuration $\mathcal{C}^{\epsilon}_{\mathcal{X}}(P_{X})$, and $\bm{\Psi}^{\hat{Y}} = (\psi^{\hat{Y}}_{1}, \cdots, \psi^{\hat{Y}}_{k})$; 2) to obtain \eqref{eq: proof_prop_noisy_3} we have employed Lemma \ref{lem: Phi = B Phi}; 3) to obtain \eqref{eq: proof_prop_noisy_4} we have used Lemma \ref{lem: delta_ss_bound} and Lemma \ref{lem: remain gammass}, with $r(\epsilon, \delta, \eta_1) = \mathrm{O} \big( (\delta+\eta_1+\delta \eta_1)\epsilon^2 \big)$; 4) to obtain \eqref{eq: proof_prop_noisy_5} we have defined the constant $C_{U} = \frac{\mathbb{E}_{\mu_{U}} \big[\| \Phi^{\hat{X} \mid U} \|_{\mathrm{F}}^2 \big]}{4 |\mathcal{X}| |\mathcal{U}|}$, and noted that the upper bound of $\big\|(\bm{\Psi}^{\hat{Y}})^{\mathrm{T}} \tilde{\bm{B}}_{\hat{X}, \hat{Y}} \big\|_{\mathrm{F}}^2$ is attained by choosing $\bm{\Psi}^{\hat{Y}}$ as the corresponding left singular vectors of $\tilde{\bm{B}}_{\hat{X}, \hat{Y}}$ \cite[Corollary 4.3.39]{horn2012matrix}.

    The error exponent in decisions for $U$ based on $\hat{S}^{k}$ can be derived in a similar way:
    \begin{align}
        & \ \bar{E}^{U \mid \hat{S}}(f^{k})  = \mathbb{E}_{\mu_{U}} \Big[\lim\limits_{N \to \infty} - \frac{\log p_{e}^{U \mid \hat{S}}(f^{k}, \mathcal{C}^{\epsilon}_{\mathcal{X}}(P_{X}))}{N} \Big] \label{eq: proof_prop_noisy_11} \\ = & \ \frac{\epsilon^2}{8} \mathbb{E}_{\mu_{U}} \Big[ \big\|(\bm{\Psi}^{\hat{X}})^{\mathrm{T}} \Phi^{\hat{X} \mid U} (\bm{e}_{u_{*}} - \bm{e}_{u_{*}'}) \big\|_{\mathrm{F}}^2 \Big] + \mathrm{o}(\epsilon^2) \label{eq: proof_prop_noisy_12} \\ = & \ \frac{\epsilon^2 \mathbb{E}_{\mu_{U}} \Big[\big\| \Phi^{\hat{X} \mid U} \big\|_{\mathrm{F}}^2 \Big]}{4 |\mathcal{X}| |\mathcal{U}|} \big\| \bm{\Psi}^{\hat{X}} \big\|_{\mathrm{F}}^2 + r(\epsilon, \delta, \eta_1) \label{eq: proof_prop_noisy_14} \\ \leq & \ C_{U} \epsilon^2 k + R(\epsilon, \delta, \eta_1, \eta_2) \label{eq: proof_prop_noisy_15}
    \end{align}
    where to obtain \eqref{eq: proof_prop_noisy_14} we have used Lemma \ref{lem: delta_ss_bound} and Lemma \ref{lem: remain gammass}.

    The proofs for $\bar{E}^{V \mid \hat{S}}(f^{k})$ and $\bar{E}^{V \mid \hat{T}}(g^{k})$ are symmetric to $\bar{E}^{U \mid \hat{T}}(g^{k})$ and $\bar{E}^{U \mid \hat{S}}(f^{k})$, respectively.

    The proof is then completed.
\end{IEEEproof}

\section{Discussion}

Considering the generalized model \eqref{eq: generalized Markov chain}, we construct features solely from the noisy observations $\hat{X}$ and $\hat{Y}$ to perform inference tasks concerning the latent attributes $U$ of $X$ and $V$ of $Y$. We formally introduce the notion of $\delta$-spherical symmetry (Definition~\ref{def: delta spherical symmetric random matrix}) and elucidate that second-moment conditions play a central role in the proposed problem. In the absence of prior knowledge about the attributes, it is natural to assume that the underlying information structures satisfy $\delta$-spherical symmetry, which permits the second moment to exhibit a certain degree of directional preference, yet not excessively so. Under this relaxed condition, we develop a universal feature selection framework that extends \cite[Proposition 5.10]{huang2024universal} to more general scenarios (Theorem \ref{the: main result}). We further demonstrate that our result recovers theirs as a special case when $\eta_1 = \eta_2 = 0$ and $\delta = o(1)$ as $\epsilon \to 0$ (Remark \ref{rem: exact is unnecessary}), thereby revealing that exact rotational invariance is not necessary for the original proposition; a relatively small directional preference is tolerable.

Generally, if $\delta, \eta_1, \eta_2 = o(1)$ as $\epsilon \to 0$, the residual term in \eqref{eq: main result equation} satisfies $R(\epsilon, \delta, \eta_1, \eta_2) = o(\epsilon^2)$. Accordingly, the features obtained via the singular value decomposition of $\tilde{\bm{B}}_{\hat{X}, \hat{Y}}$ achieve asymptotic optimality in the local regime. For cases where the parameters $\delta, \eta_1, \eta_2$ are not relatively small (e.g., are of order $\mathrm{O}(1)$), the proposed universal feature selection framework remains meaningful: the error bound is then governed by the residual term $R(\epsilon, \delta, \eta_1, \eta_2)$, which depends explicitly on $\delta$, $\eta_1$, and $\eta_2$. This demonstrates that the framework is robust against both symmetry deviation $\delta$ and observation noise $\eta_1, \eta_2$, as the error bound remains well-controlled even when these parameters are not vanishingly small. Moreover, the framework requires no prior knowledge of the latent attributes, rendering it applicable to a variety of domains where the attribute distributions are unknown or undetermined a priori. In summary, our framework extends to a broad class of approximately symmetric and noisy environments across a diverse range of inference tasks. We believe this work provides a robust, theoretically grounded tool for universal feature selection in practical, imperfect data settings within machine learning and statistical inference.

\balance

\end{document}